\documentclass[journal,draftclsnofoot,onecolumn,12pt,twoside]{IEEEtran}

\hyphenation{op-tical net-works semi-conduc-tor IEEEtran}
\usepackage{graphicx}
\DeclareGraphicsRule{.png}{eps}{.bb}{}

\usepackage{algorithm}
\usepackage{algorithmic}

\usepackage{booktabs}

\usepackage[colorlinks,
linkcolor=black,
anchorcolor=black,
citecolor=black,
urlcolor = blue
]{hyperref}

\usepackage{amssymb}
\usepackage{subfigure}
\usepackage{multicol}
\usepackage{multirow}
\usepackage{amsmath}
\usepackage{bm}
\usepackage{cite}
\usepackage{gensymb}
\usepackage{amsfonts}
\usepackage{mathrsfs}
\usepackage{amsmath}
\usepackage{color}
\usepackage{balance}
\usepackage{bm}
\usepackage{setspace}

\usepackage{stfloats}
\usepackage{float}
\usepackage{epstopdf}

\usepackage{array}
\newcommand{\PreserveBackslash}[1]{\let\temp=\\#1\let\\=\temp}
\newcolumntype{C}[1]{>{\PreserveBackslash\centering}p{#1}}
\usepackage[flushleft]{threeparttable}
\usepackage{stfloats}
\usepackage{mathrsfs}
\usepackage{threeparttable}
\usepackage{setspace}

\newtheorem{lemma}{Lemma}


\newtheorem{proposition}{\it Proposition}

\begin{document}

\bibliographystyle{IEEEtran} 

\title{Optical Intelligent Reflecting Surface \\Assisted MIMO VLC: Channel Modeling and Capacity Characterization}

\author{Shiyuan Sun, Weidong Mei, \emph{Member, IEEE}, Fang Yang, \emph{Senior Member, IEEE}, Nan An, Jian Song, \emph{Fellow, IEEE}, and Rui Zhang, \emph{Fellow, IEEE}	

\vspace{-1.8cm}

\begin{spacing}{0.3}
	\thanks{
		
		\scriptsize 
		
		S. Sun, F. Yang, and N. An are with the Department of Electronic Engineering, Tsinghua University, Beijing 100084, P. R. China, and also with the Key Laboratory of Digital TV System of Shenzhen City, Research Institute of Tsinghua University in Shenzhen, Shenzhen 518057, P. R. China.
		\textit{(Corresponding author: Fang Yang)}

		J. Song is with the Department of Electronic Engineering, Tsinghua University, Beijing 100084, P. R. China, and also with the Shenzhen International Graduate School, Tsinghua University, Shenzhen 518055, P. R. China.
		
		W. Mei is with the Department of Electrical and Computer Engineering, National University of Singapore, Singapore 117583.

		R. Zhang is with The Chinese University of Hong Kong, Shenzhen, and Shenzhen Research Institute of Big Data, Shenzhen, China 518172.
		He is also with the Department of Electrical and Computer Engineering, National University of Singapore, Singapore 117583.
	}
\end{spacing}
}
\newenvironment{thisnote}{\par\color{blue}}{\par}

\maketitle
\begin{abstract}
\vspace{-0.2cm}
Although the multi-antenna or so-called multiple-input multiple-output (MIMO) transmission has been the enabling technology for the past generations of radio-frequency (RF)-based wireless communication systems, its application to the visible light communication (VLC) still faces a critical challenge as the MIMO spatial multiplexing gain can be hardly attained in VLC channels due to their strong spatial correlation.
In this paper, we tackle this problem by deploying the optical intelligent reflecting surface (OIRS) in the environment to boost the capacity of MIMO VLC.
Firstly, based on the extremely near-field channel condition in VLC, we propose a new channel model for OIRS-assisted MIMO VLC and reveal its peculiar ``no crosstalk'' property, where the OIRS reflecting elements can be respectively configured to align with one pair of transmitter and receiver antennas without causing crosstalk to each other.
Next, we characterize the OIRS-assisted MIMO VLC capacities under different practical power constraints and then proceed to maximize them by jointly optimizing the OIRS element alignment and transmitter emission power.
In particular, for optimizing the OIRS element alignment, we propose two algorithms, namely, location-aided interior-point algorithm and log-det-based alternating optimization algorithm, to balance the performance versus complexity trade-off; while the optimal transmitter emission power is derived in closed form.
Numerical results are provided, which validate the capacity improvement of OIRS-assisted MIMO VLC against the VLC without OIRS and demonstrate the superior performance of the proposed algorithms compared to baseline schemes.
\end{abstract}


\begin{IEEEkeywords}
\vspace{-0.3cm}
Visible light communication (VLC),  multiple-input multiple-output (MIMO), optical intelligent reflecting surface (OIRS), capacity maximization, channel modeling.
\end{IEEEkeywords}

\IEEEpeerreviewmaketitle


\vspace{-0.5cm}
\section{Introduction}
\begingroup
\allowdisplaybreaks

\vspace{-0.1cm}
\IEEEPARstart{I}{n} the past several decades, wireless communication systems have been developed towards using higher frequency and broader bandwidth, installing more transceiver antennas, and adopting more efficient resource allocation schemes~\cite{goldsmith2005wireless}.
Compared to the spectrums used in the previous-generation (i.e.  2G-4G) wireless  systems, the spectrum allocated for today's 5G systems has migrated from the sub-6 GHz  frequency band to the higher millimeter wave (mmWave) frequency bands beyond 24 GHz, which provides tens of times more bandwidth~\cite{shafi20175g}.
However, to meet the unprecedentedly higher demand of nearly 50 exabyte-per-month data traffic anticipated for the future 6G and beyond wireless systems~\cite{traffic}, two main approaches may play a pivotal role to further enhance the wireless system capacity, namely,  massive multiple-input multiple-output (MIMO) and visible light communication (VLC), among others. 


On the one hand, thanks to the smaller wavelengths at future higher operating frequency, the base stations (BSs) and user devices can be equipped with more (tens or even hundreds of) antenna elements, thereby providing rich spatial degrees of freedom (DoFs) to enhance the wireless communication performance~\cite{tse2005fundamentals,cho2010mimo}.
This can provide a large array gain to compensate for the more severe path loss over higher frequency, as well as high spatial multiplexing gains to boost the system capacity, while enabling highly-directional beams to mitigate multi-user interference~\cite{lu2014overview}.
The above benefits of massive MIMO have been validated experimentally over mmWave band~\cite{rappaport2013millimeter}.

On the other hand, the next-generation wireless systems are anticipated to cater to diverse bandwidth-intensive services, such as connected robotics and autonomous systems, e-health, and eXtended reality (XR), which require 1000$\times$ increase in data rate, as well as more diverse wireless functions including communications, sensing, localization, etc., as compared to today's 5G wireless system. 
These stringent requirements may drive the spectrum migration from the existing sub-6 GHz and mmWave spectrums towards higher terahertz (THz) spectrum and even the optical spectrum, where VLC (400 THz-800 THz) is one of the major technologies~\cite{chi2020visible}.
In particular, with an extremely broad 400 THz license-free frequency band, VLC is promising to support explosive data growth. 
Furthermore, thanks to the low-cost VLC devices such as light-emitting diode (LED) and photodetector (PD), VLC can be densely deployed in the future wireless communication systems with low energy consumption and hardware costs~\cite{karunatilaka2015led}.
These advantages have spurred extensive research efforts in characterizing the performance limits of VLC. 
For example, the authors in~\cite{lapidoth2009capacity} characterized the capacity of a single-input single-output (SISO) VLC system, and the results were later extended to the multiple-input single-output (MISO)~\cite{8336902,8277933} and MIMO VLC systems~\cite{8006585}.
However, due to the strong spatial correlation over the optical spectrum, the multiplexing gain of MIMO VLC may suffer from significant performance loss~\cite{ying2015joint}, which 
limits the MIMO VLC capacity.
Although there have been several candidate approaches to tackle this challenge by using e.g., narrow field-of-view (FoV) PDs~\cite{ying2015joint}, imaging receiver~\cite{chen2016wide}, and/or angle-diversity receiver~\cite{nuwanpriya2015indoor}, they all need to modify existing VLC transmitter/receiver architecture and thus may face difficulty in practical implementation.

Recently, intelligent reflecting surface (IRS) has emerged as a promising technique to boost the wireless network performance in a cost-effective manner. 
Specifically, IRS is a digitally controlled metasurface comprising a massive number of passive reflecting elements, each able to tune the amplitude and/or phase of the impinging electromagnetic wave in real time~\cite{8466374}.
By integrating IRSs into wireless communication systems and jointly optimizing their reflections, the communication channels may be significantly reshaped to favor the signal transmission, without the need of modifying existing transmitter/receiver architectures~\cite{zhang2020capacity,9052904,9425508}.

Motivated by the above, in this paper, we propose to use an optical IRS (OIRS) to resolve the spatial correlation issue of VLC channels and thereby boost the MIMO VLC capacity.
It is worth mentioning that the OIRS has been studied in some prior works, on technological overview~\cite{jamali2021intelligent,Sun2021,9614037,aboagye2022ris}, channel modeling~\cite{2021intelligent,abdelhady2020visible,9681888}, performance optimization~\cite{sun_CL,sun2022joint,9543660,qian2021secure}, and so on~\cite{sun2022miso}. 
However, all of the above works only focused on the SISO/MISO VLC. 
In~\cite{wu2023configuring}, the authors studied a simplified OIRS-aided MIMO VLC system with a bunch of parallel SISO VLC channels.
As such, the general OIRS-aided MIMO VLC has not been studied in the existing works.
This paper fills in this gap and its main contributions are summarized as follows:
\begin{itemize}
	\item
	Firstly, as the electrical size of OIRS far exceeds that of conventional IRS working in radio-frequency (RF) band (e.g., sub-6G and mmWave communications), each OIRS reflecting element should be considered as an infinite large specular reflector, which results in the extremely near-field condition in VLC.
	Based on this, we develop a new OIRS channel model and reveal an interesting ``no crosstalk'' property of it, where the OIRS reflecting elements can be respectively configured to align with a specific pair of transmitter and receiver antennas, yet without causing any crosstalk to each other in their reflected links.
	
	\item
	Next, based on the developed OIRS channel model, we derive the tight lower and upper bounds on the OIRS-assisted MIMO VLC capacities under different types of emission power constraints at the transmitter, respectively, based on which the asymptotic capacity expressions are obtained in closed form in the high signal-to-noise-ratio (SNR) regime.
	Accordingly, we formulate a capacity maximization problem for the OIRS-assisted MIMO VLC, by jointly optimizing the OIRS's element alignment with the transmitter/receiver antenna pairs and the transmitter's emission power. 
	
	\item
	To solve the capacity maximization problem, we first show that it can be decoupled into two separate subproblems with respect to (w.r.t.) the OIRS element alignment and the transmitter emission power, respectively. 
	To solve the first subproblem, we propose two algorithms, referred to as location-aided interior-point (LIP) algorithm and log-det-based alternating optimization (LDAO) algorithm, which can balance the performance versus complexity trade-off.
	While for the second subproblem, we derive the optimal transmitter emission power in closed-form.
	Finally, numerical results are provided to validate the performance of the proposed algorithms in boosting the OIRS-assisted MIMO VLC capacity as compared to other baseline schemes.
\end{itemize}

The remainder of this paper is organized as follows. 
Section~\ref{Sec:Channel} and Section~\ref{Sec:MIMO VLC} present the channel model and capacity results of the OIRS-assisted MIMO VLC, respectively.
Section~\ref{Sec:Proposed} presents the formulated capacity maximization problems and the proposed algorithms to solve them.
Numerical results are presented in Section~\ref{Sec:Num} to evaluate the performance of the proposed algorithms.
Finally, Section~\ref{Sec:Conclude} concludes this paper.

\textit{Notations:}
In this paper, scalars are denoted by normal letters (e.g., $m$ and $M$), while vectors and matrices are denoted by small and large boldface letters (e.g., $\boldsymbol{m}$ and $\boldsymbol{M}$), respectively.
Moreover, the matrix transpose, vectorization, Frobenius norm, differential operator, and statistical expectation are denoted by $(\cdot)^T$, $\text{vec}(\cdot)$, $\left\|\cdot\right\|_F$, $\partial(\cdot)$ (or $\nabla(\cdot)$), and $\mathbb{E} [\cdot]$, respectively.
The Hadamard product, floor operator, and modulus operator are denoted by $\odot$, $\lfloor \cdot \rfloor$, and $\text{mod}(\cdot)$, respectively. 
The set $\mathbb{R}$ denotes the real number domain, while $\mathbb{R}^{+}$ denotes the real and non-negative number domain.
The $N \times N$ identity matrix, all-zero matrix, and all-one matrix are denoted as $\boldsymbol{I}_N$, $\boldsymbol{0}$, and $\boldsymbol{1}$, respectively.
Furthermore, diag($[\boldsymbol{m}_1, \cdots, \boldsymbol{m}_n]$) represents a block diagonal matrix, where the $i$th main diagonal vector is given by $\boldsymbol{m}_i$, and $[\boldsymbol{M}]_{i,j}$ or $m_{i,j}$ represents the element at the $i$th row and $j$th column of the matrix $\boldsymbol{M}$.
Finally, $\preceq$ means that each element at the left-hand side is smaller than or equal to its counterpart at the right-hand side.


\vspace{-0.2cm}
\section{System Model}
\vspace{-0.1cm}
\label{Sec:Channel}
As shown in Fig. 1, we consider an OIRS-assisted MIMO VLC system, where an $N$-element OIRS is deployed to assist in the optical communications from an $N_t$-LED transmitter to an $N_r$-PD receiver.
Due to the negligble non-line-of-sight (NLoS) channels in VLC, we assume line-of-sight (LoS) channels among the transmitter, receiver, and the OIRS. 
However, different from conventional RF-based wireless communications, the OIRS-assisted system has distinct properties in electrical size, signal model, power scaling law, and so on~\cite{jamali2021intelligent}.
As such, a new OIRS channel model is needed to account for these specific features. 
To this end, we first present three typical propagation conditions in RF or VLC, namely, far-field, near-field, and extremely near-field conditions, as detailed below.

\begin{figure}[t]
	\centering
	\includegraphics[width=0.6\textwidth]{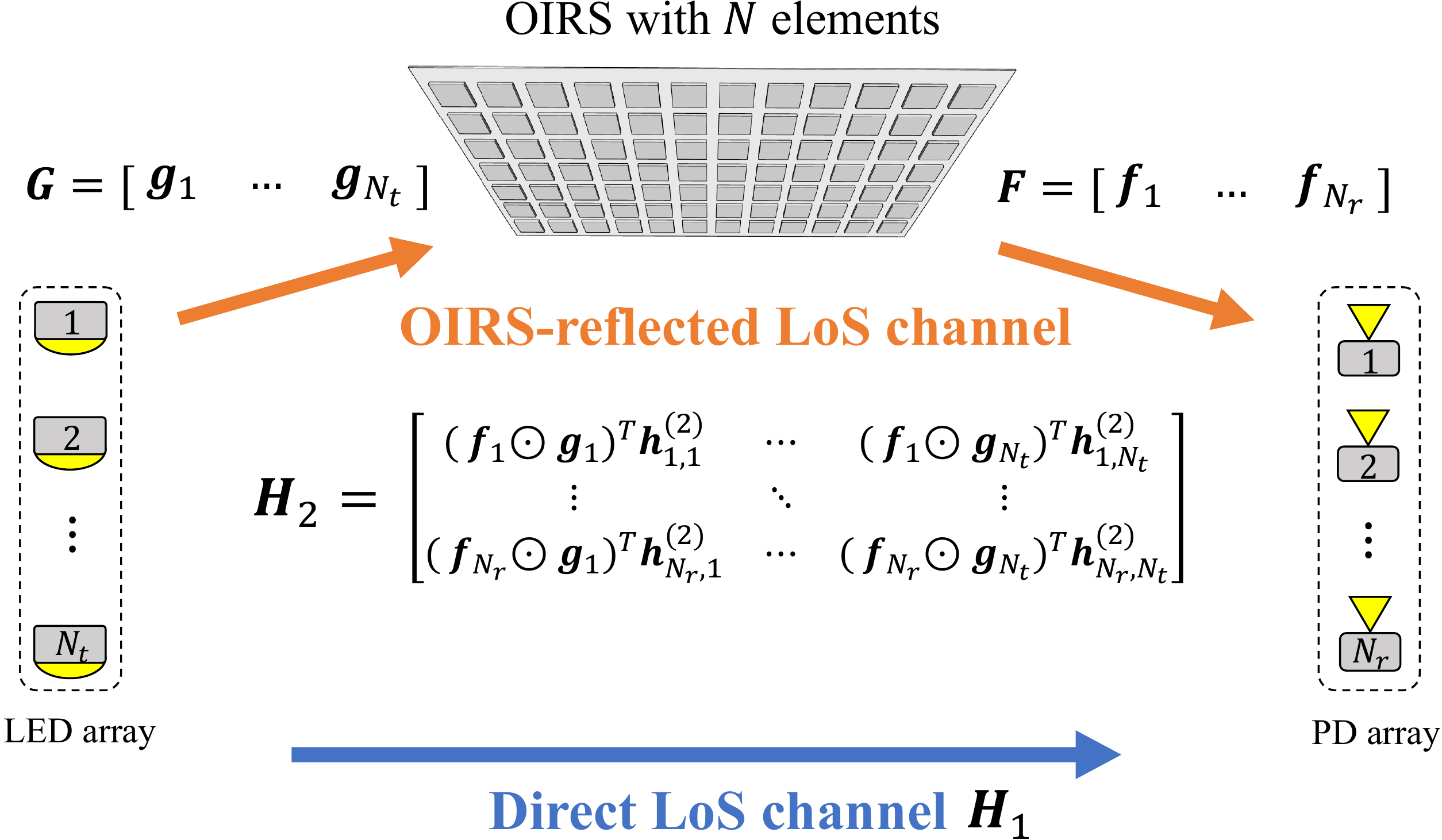}
	\caption{Channel model of the OIRS-assisted MIMO VLC.}
	\label{Fig:System}
	\vspace{-1.0cm}
\end{figure}

\vspace{-0.4cm}
\subsection{Comparison among Different Propagation Conditions}
\label{subsec:extremely}
\textbf{1) Far-field}:
The Rayleigh distance is often used as the demarcation between the near field and far field, which is given by
\begin{equation}
	\setlength\abovedisplayskip{5pt}
	\label{rayleigh}
	d_0 = \frac{2L^2f_c}{c},
	\setlength\belowdisplayskip{5pt}
\end{equation}
where $c$, $f_c$, and $L$ denote the light speed, the operating frequency, and the array aperture, respectively. 
When the distance between the transmitter/receiver and the IRS far exceeds~(\ref{rayleigh}), the far-field propagation occurs, under which the electromagnetic wave can be locally modeled as a planar wave at the IRS~\cite{sarieddeen2020next}.
Furthermore, the IRS-reflected path loss under the far-field propagation follows a ``multiplicative'' model, where it is inversely proportional to the square of the product of the transmitter-IRS distance and the IRS-receiver distance~\cite{wu2019towards}. 

\begin{figure*}[t]
	\centering
	\setlength{\abovecaptionskip}{0.3cm}
	\includegraphics[width=0.65\textwidth]{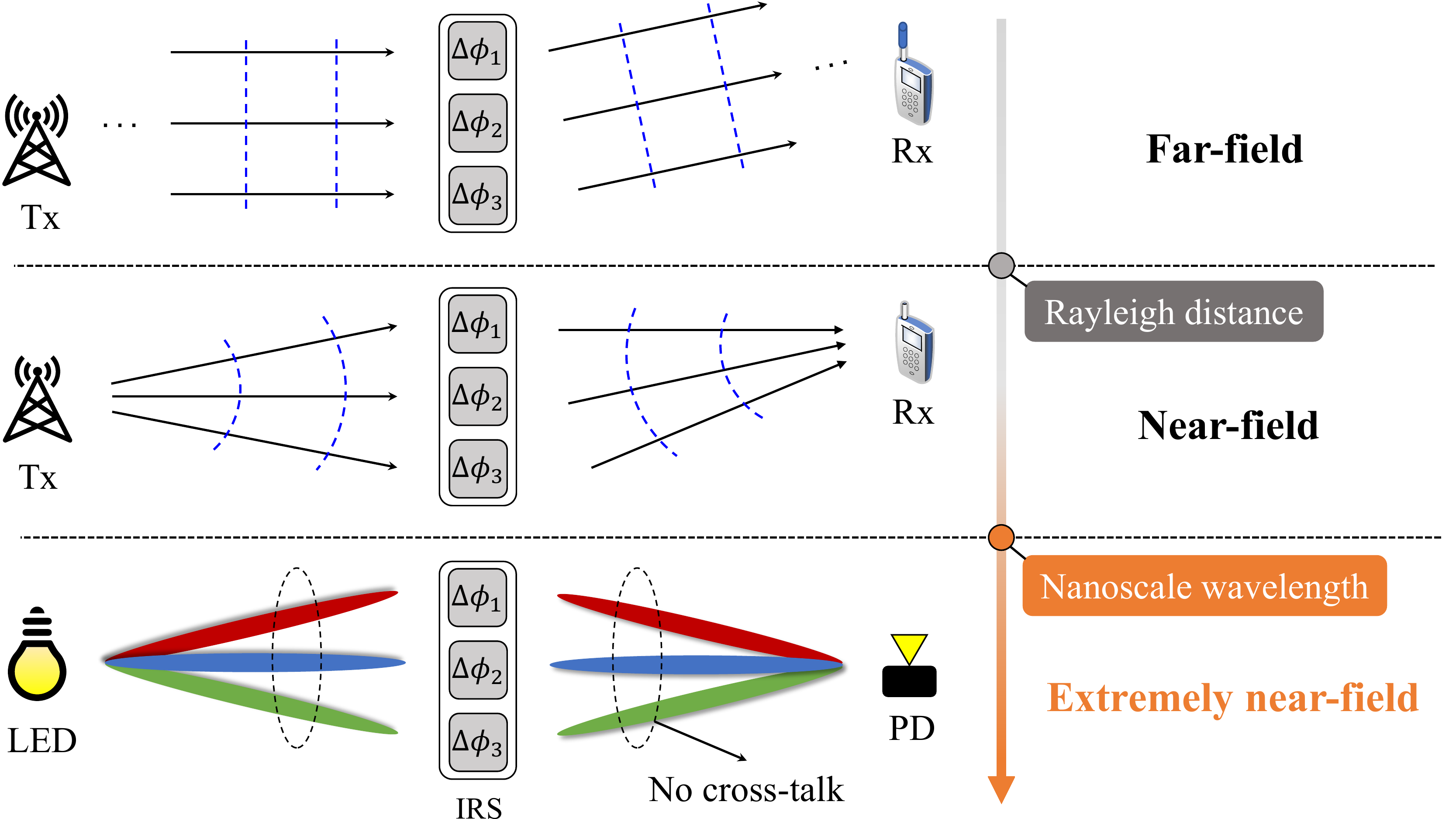}
	\caption{Illustration of the far-field, near-field, and extremely near-field propagation conditions.}
	\label{Fig:Near}
	\vspace{-0.9cm}
\end{figure*}

\textbf{2) Near-field}:
It is noted that the Rayleigh distance monotonically increases with $f_c$ and $L$. 
As such, when the operating frequency $f_c$ is sufficiently high, the near-field propagation will more frequently occur than the far-field one, e.g., in mmWave and THz communications.
The main difference between the far- and near-field propagation conditions lies in the phase differences among the IRS reflecting elements, which become nonlinear under the latter condition.
In this case, the incoming/outgoing signal at the IRS should be modeled as the spherical (instead of planar) wave~\cite{cui2022near}, which thus can realize the functionality of both beamforming and beamfocusing.
Moreover, the IRS-reflected path loss under the near-field propagation becomes closer to an ``additive'' model~\cite{tang2021wireless}, where it is inversely proportional to the square of the sum of the transmitter-IRS distance and the IRS-receiver distance.

\textbf{3) Extremely near-field}:
In optical spectrum, the operating wavelength $\lambda_c = c/f_c$ is generally from hundreds of nanometers to thousands of nanometers, which is negligible compared to the aperture of the OIRS reflecting element, thus giving rise to the so-called extremely near-field propagation condition. 
Different from the wave models in RF, the signal in VLC is modeled by the Lambertian wave, under which each reflecting element of the OIRS performs similarly as an infinitely large reflector~\cite{abdelhady2020visible}.
Hence, the directions of the incoming and outgoing electromagnetic waves at the OIRS can be approximately determined based on the geometric optics following the generalized Snell's law~\cite{2021intelligent}, with negligible energy leakage towards other undesired directions~\cite{jamali2021intelligent}.
Thanks to this peculiar property, if an OIRS reflecting element is configured to align with a specified pair of transmitter and receiver antennas, it will not interfere with the reflected paths by any other OIRS elements (each aligning with either this or another pair of transmitter/receiver antenna), which is referred to as the ``no cross-talk'' property of the OIRS-reflected channels, as illustrated at the bottom part of Fig.~\ref{Fig:Near}.
It follows that for any given pair of transmitter and receiver antennas, the reflection amplitude of each OIRS reflecting element can be equivalently viewed as a binary variable of $0$ or $1$, depending on whether they are aligned.
In summary, we compare the properties of the above propagation conditions in Table~\ref{Extremely-near-field}.

\newcommand{\tabincell}[2]{\begin{tabular}{@{}#1@{}}#2\end{tabular}}
\renewcommand\arraystretch{1.5}
\begin{table*}[t]
	\centering
	\small
	\caption{Comparison of far-field, near-field, and extremely near-field communications.}
	\vspace{-0.4cm}
	\label{Extremely-near-field}
	\begin{tabular}{| l | c | c | c |}
		\hline
		& \textbf{Extremely near-field} & \textbf{Near-field} & \textbf{Far-field} \\
		\hline
		Operating frequency & Optical range & mmWave and THz ranges & Low-frequency RF range \\
		\hline
		Wave model & Lambertian wave & Spherical wave & Planar wave \\
		\hline
		Path loss ($PL$)\footnotemark[1]  & $PL \propto (d_1 + d_2)^2$  & $PL \overset{\approx}{\propto} (d_1 + d_2)^2$ & $PL \propto d_1^2d_2^2$ \\
		\hline
		Channel model & ``Additive'' model & Nearly  ``additive'' model & ``Multiplicative'' model \\
		\hline
	\end{tabular}
	\vspace{-0.8cm}
\end{table*}
\footnotetext[1]{$d_1$ and $d_2$ denote the transmitter-IRS and IRS-receiver distances, respectively.}

\vspace{-0.3cm}
\subsection{Channel Model of OIRS-Assisted MIMO VLC}
\vspace{-0.2cm}
\label{Subsec:Gain}
In this subsection, we develop the channel model of the OIRS-assisted MIMO VLC system based on the extremely near-field condition.
Note that the VLC channels are approximately frequency-flat since hardware devices can hardly distinguish the nanosecond delay of the multipath.
Accordingly, we denote by $\boldsymbol{H}_1$ and $\boldsymbol{H}_2  \in \mathbb{R}_{N_r\times N_t}^+$ as the direct and OIRS-reflected LoS MIMO channels from the LED to the PD, respectively.
Moreover, let $\mathcal{T}\triangleq\{1,\cdots,N_t\}$, $\mathcal{R}\triangleq\{1,\cdots,N_r\}$, and $\mathcal{N}\triangleq\{1,\cdots,N\}$ denote the index sets of the LEDs, PDs, and OIRS reflecting elements, respectively.
The detailed expressions of $\boldsymbol{H}_1$ and $\boldsymbol{H}_2$ are provided as follows.


\textbf{Direct LoS channel}:
According to the Lambertian radiant formula, the direct LoS channel from the $i$th LED to the $j$th PD is given by~\cite{obeed2019optimizing}
\begin{equation}
	\label{LoS_VLC_Link}
		\setlength\abovedisplayskip{7pt}
	h_{j,i}^{\left(1\right)} =
	\left\{
	\begin{aligned}
		&\frac{q^2(m+1)A_{PD}}{2\pi D_{j,i}^2 \sin^2(\Phi_0)}g_{of}\cos^m(\Theta_{j,i})\cos(\Phi_{j,i}), & &\text{if}\ 0 \leq \Phi_{j,i} \leq \Phi_0,   \\
		&0, & &\text{otherwise},
	\end{aligned}
	\right.
		\setlength\belowdisplayskip{7pt}
\end{equation}
where $q$, $m$, $A_{PD}$, $g_{of}$, and $\Phi_0$ denote the refractive index, the Lambertian index, the area of PD, the optical filter gain, and the semi-angle of FoV, respectively;
$D_{j,i}$ denotes the distance from the $i$th LED to the $j$th PD, $\Theta_{j,i}$ denotes the angle of irradiance at the LED array, and $\Phi_{j,i}$ denotes the angle of incidence at the PD array.
Based on~(\ref{LoS_VLC_Link}), the direct LoS MIMO channel from the LED array to the PD array can be expressed as $\boldsymbol{H}_1 \triangleq [\boldsymbol{h}_1^{(1)}, \cdots, \boldsymbol{h}_{N_t}^{(1)}]$, with $\boldsymbol{h}_{i}^{(1)} \triangleq [h_{1,i}^{(1)},\cdots,h_{N_r,i}^{(1)}]^T$.

\textbf{OIRS-reflected LoS channel}:
As previously discussed in Section~\ref{subsec:extremely}, under the extremely near-field condition, the OIRS-reflected LoS channel presents approximately no cross-talk among its reflecting elements. 
To depict this effect, we define two binary matrices $\boldsymbol{G} \triangleq [ \boldsymbol{g}_1, \cdots, \boldsymbol{g}_{N_t} ] \in\{0,1\}_{N\times N_t}$ and $\boldsymbol{F} \triangleq [ \boldsymbol{f}_1, \cdots, \boldsymbol{f}_{N_r} ] \in\{0,1\}_{N\times N_r}$ to indicate the transmitter/receiver antenna pair aligned with each OIRS reflecting element.
More specifically, if $g_{n, i} = f_{n, j} = 1$, the $n$th OIRS reflecting element is aligned with the $i$th LED and the $j$th PD. 
Then, based on the point source assumption~\cite{obeed2019optimizing,abdelhady2020visible}, their formed cascaded channel can be approximately expressed as
\begin{equation}
	\label{NonLoS_VLC_Link}
	\setlength\abovedisplayskip{5pt}
	h_{j,n,i}^{\left(2\right)} =
	\left\{
	\begin{aligned}
		&\frac{\gamma q^2(m+1)A_{PD}}{2\pi (d_{n,i}^{1}+d_{j,n}^{2})^2 \sin^2(\Phi_0)}g_{of}\cos^m(\theta_{n,i})\cos(\phi_{j,n}), & &\text{if}\ 0 \leq \phi_{j,n} \leq \Phi_0, \\
		&0, & &\text{otherwise},
	\end{aligned}
	\right.
	\setlength\belowdisplayskip{5pt}
\end{equation}
where $\gamma$, $d_{n,i}^{1}$, and $d_{n,j}^{2}$ denote the reflectivity of the OIRS reflecting element, the distance from the $i$th LED to the $n$th OIRS reflecting element, and the distance from the $n$th OIRS reflecting to the $j$th PD, respectively; and the angle of irradiance at the $i$th LED and the angle of incidence at the $j$th PD are denoted as $\theta_{n,i}$ and $\phi_{j,n}$, respectively.
Accordingly, we define $\boldsymbol{h}_{j,i}^{(2)} \triangleq [h_{j,1,i}^{(2)}, \cdots, h_{j,N,i}^{(2)}]^T$ as the OIRS-reflected LoS channel from the $i$th LED to $j$th PD.
Evidently, only one of the $N$ entries in $\boldsymbol{h}_{j,i}^{(2)}$ is non-zero.

As each OIRS reflecting element can only be tuned to align with one pair of transmitter and receiver antennas, we have
\begin{align}
	\setlength\abovedisplayskip{3pt}
	\sum_{i=1}^{N_t}g_{n,i} \leq 1,\quad \sum_{j=1}^{N_r}f_{n, j}\leq1, \quad \forall\ \! n \in \mathcal{N}. \label{Eq:Row summation}
	\setlength\belowdisplayskip{3pt}
\end{align}
It follows from the above that the OIRS-reflected LoS channel from the $i$th LED to the $j$th PD depends on both the reflected channel gain and the alignment at the OIRS (as specified by $\boldsymbol{f}_j$ and $\boldsymbol{g}_i$), which can be written as~\cite{sun2022mimo}
\vspace{-0.3cm}
\begin{align}
	\setlength\abovedisplayskip{3pt}
	\label{Eq:H2_generation}
	\left[\boldsymbol{H}_2\right]_{j,i} = \left(\boldsymbol{f}_{j} \odot \boldsymbol{g}_{i}\right)^T\boldsymbol{h}_{j,i}^{(2)}.
	\setlength\belowdisplayskip{3pt}
\end{align}
It is noted from~(\ref{Eq:H2_generation}) that the $n$th OIRS reflecting element contributes to $[\boldsymbol{H}_2]_{j,i}$ if and only if $g_{n, i} = f_{n, j} = 1$, under which we have $[\boldsymbol{H}_2]_{j,i} = h_{j,n,i}^{(2)}$.

\vspace{-0.3cm}
\section{OIRS-Assisted MIMO VLC Capacity Characterization}
\label{Sec:MIMO VLC}
In this section, based on the developed OIRS channel model, we derive the exact and asymptotic (in the high-SNR regime) capacities of the OIRS-assisted MIMO VLC under three types of practical power constraints, where we assume different emission power at LEDs, in contrast to existing works assuming identical emission power~\cite{8006585}.
Specifically, the received signal at the PD array can be expressed as 
\begin{equation}\\
	\setlength\abovedisplayskip{3pt}
	\label{Eq:signal model}
	\boldsymbol{y} = \boldsymbol{H}\boldsymbol{x} + \boldsymbol{z},
	\setlength\belowdisplayskip{3pt}
\end{equation}
where $\boldsymbol{x} \triangleq [x_1,\cdots,x_{N_t}]^T \in \mathbb{R}_{N_t\times 1}^+$ denotes the optical intensity emitted from the LED array (transmitter), $\boldsymbol{y} \triangleq [y_1,\cdots,y_{N_r}]^T \in \mathbb{R}_{N_r\times 1}^+$ denotes the received signal at the PD array (receiver), and $\boldsymbol{z}\sim N(\boldsymbol{0},\boldsymbol{K})$ denotes the thermal noise with $\boldsymbol{K}$ denoting its covariance matrix.
The MIMO channel is given by the superposition of the direct and OIRS-reflected LoS components, i.e.,
\begin{equation}
	\setlength\abovedisplayskip{3pt}
	\label{Eq:channel}
	\boldsymbol{H} = \boldsymbol{H}_1 + \boldsymbol{H}_2.
	\setlength\belowdisplayskip{3pt}
\end{equation}

\footnotetext[2]{Note that different from the RF communication, VLC signal is modulated on the amplitude of the emitting light. In this regard, we interchangeably use the term ``intensity'' and ``power'' in this paper.}

In VLC, the input $\boldsymbol{x}$ is generally subjected to two typical optical intensity constraints\footnotemark[2], namely, peak optical intensity and average optical intensity constraints~\cite{lapidoth2009capacity}.
Firstly, for the peak optical intensity constraint, let $\boldsymbol{A} \triangleq [A_1,\cdots,A_{N_t}]^T  \in \mathbb{R}_{N_t\times 1}^+$ be the peak intensity vector at the LED transmitter, where $A_i$ denotes the peak intensity of the $i$th LED, and let $\text{Pr}(\cdot)$ denote the probability measurement. 
Then, the optical intensity per LED should satisfy
\begin{equation}
	\setlength\abovedisplayskip{3pt}
	\label{Eq:peak constriant}
	\text{Pr}\left( x_i > A_i \right) = 0, \quad \forall i\in \mathcal{T}.
	\setlength\belowdisplayskip{3pt}
\end{equation}
Secondly, for the average optical intensity constraint, let $\boldsymbol{E} \triangleq [E_1,\cdots,E_{N_t}]^T  \in \mathbb{R}_{N_t\times 1}^+$ be the average intensity vector at the LED transmitter, where $E_i$ denotes the average intensity of the $i$th LED.
Then, the optical intensity per LED should satisfy 
\begin{equation}
	\setlength\abovedisplayskip{3pt}
	\label{Eq:average constriant}
	\mathbb{E}\left( x_i \right) < E_i.
	\setlength\belowdisplayskip{3pt}
\end{equation}
Meanwhile, the total average optical intensity of the LED array is constrained by
\begin{equation}
	\setlength\abovedisplayskip{3pt}
	\label{Eq:total_average constriant}
	\sum_{i = 1}^{N_t}\mathbb{E}\left( x_i \right) < E,
	\setlength\abovedisplayskip{3pt}
\end{equation}
\vspace{-0.8cm}

\noindent
where $E \triangleq \sum_{i=1}^{N_t} E_i$ denotes the total average intensity. 
Given these two types of optical intensity constraints, we further define the ratio of the sum average intensity to the sum peak intensity as $\alpha$, which is expressed as
\begin{equation}
	\setlength\abovedisplayskip{3pt}
	\label{Eq:alpha}
	\alpha = \frac{E}{\sum_{i = 1}^{N_t} A_i}.
	\setlength\belowdisplayskip{3pt}
\end{equation}
Note that since the total average intensity cannot exceed the total peak intensity, we have $\alpha \in (0,1]$. 

In particular, the exact capacity of the MIMO VLC system is given by~\cite{lapidoth2009capacity,8006585}
\begin{equation}
	\setlength\abovedisplayskip{3pt}
	\label{Eq:capacity}
	C(\boldsymbol{A}, E) = \mathop{\text{sup}}\limits_{Q\left(\boldsymbol{x}\right)}\textit{I}\left( Q\left(\boldsymbol{x}\right); W\left(\boldsymbol{y}|\boldsymbol{x}\right) \right),
	\setlength\belowdisplayskip{3pt}
\end{equation}
where ``sup'' denotes the supremum operator, $\textit{I}(\boldsymbol{x};\boldsymbol{y})$ denotes the mutual information between the input and output signals, $Q(\boldsymbol{x})$ denotes the probability measure of the input $\boldsymbol{x}$, and $W(\boldsymbol{y}|\boldsymbol{x})$ denotes the transition probability measure from $\boldsymbol{x}$ to $\boldsymbol{y}$.
It has also been shown in~\cite{8006585} that the MIMO VLC capacity varies with $\alpha$.
However, the capacity-achieving distribution of the input $\boldsymbol{x}$ in VLC is in a discrete form~\cite{obeed2019optimizing}, thus making it difficult to obtain a closed-form expression of~(\ref{Eq:capacity}). 
In the following, we aim to derive its tight upper and lower bounds under different types of intensity constraints, which help to obtain the asymptotic capacity expression in the high-SNR regime. 

Next, we consider the following three cases of intensity constraints. 
In Cases \uppercase\expandafter{\romannumeral1} and \uppercase\expandafter{\romannumeral2}, we consider both the peak intensity and average intensity constraints with $\alpha\in (0,1/2)$ and $\alpha\in [1/2,1]$, respectively. 
While in Case \uppercase\expandafter{\romannumeral3}, only the average intensity constraint is considered with $\alpha\to0$. 
The corresponding capacity results in the above three cases are presented in the following propositions.

\begin{proposition}
	\label{pro1}
	In Case \uppercase\expandafter{\romannumeral1}, the capacity of the OIRS-assisted MIMO VLC in the high-SNR regime, i.e., $A_i \to \infty, \forall i$, is given by
	\begin{align}
	\setlength\abovedisplayskip{3pt}
	\label{Eq:MIMOcase1_capacity}
	C\left( \boldsymbol{G}, \boldsymbol{F}, \boldsymbol{A} \right) = \sum_{i = 1}^{N_t}\log A_i + \frac{1}{2}\log\det\left(\boldsymbol{H}^T\boldsymbol{K}^{-1}\boldsymbol{H}\right) + \chi\left( \alpha \right),\quad \alpha \in (0,\frac{1}{2}),
	\setlength\belowdisplayskip{3pt}
	\end{align}
	where 
	\begin{align}
		\setlength\abovedisplayskip{3pt}
		\label{Eq:chi_1}
		\chi\left( \alpha \right) \triangleq - N_t \left( \frac{\log2\pi e}{2} + \log\left(1-\alpha\mu^*\right) + \mu^*\left( 1-\alpha \right) \right),
		\setlength\belowdisplayskip{3pt}
	\end{align}
	and $\mu^*$ is the root of the equation
	\begin{equation}
		\label{Eq:MIMOcase1_mu}
		\alpha = \frac{1}{\mu^*} - \frac{e^{-\mu^*}}{1-e^{-\mu^*}}.
	\end{equation}
\end{proposition}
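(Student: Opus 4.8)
The plan is to reduce the OIRS-assisted MIMO VLC capacity under Case~\uppercase\expandafter{\romannumeral1} to a known single-antenna result via a suitable linear transformation and then characterize the high-SNR asymptotics by invoking the entropy-power framework used for amplitude-constrained channels. First I would note that under the ``no crosstalk'' property established in Section~\ref{Sec:Channel}, the effective channel $\boldsymbol{H} = \boldsymbol{H}_1 + \boldsymbol{H}_2$ is a fixed deterministic matrix once $\boldsymbol{G}, \boldsymbol{F}$ are specified, so the problem is a deterministic MIMO channel $\boldsymbol{y} = \boldsymbol{H}\boldsymbol{x} + \boldsymbol{z}$ with $\boldsymbol{z} \sim N(\boldsymbol{0}, \boldsymbol{K})$, per-component peak constraints $0 \le x_i \le A_i$, and a sum-average constraint $\sum_i \mathbb{E}(x_i) < E = \alpha \sum_i A_i$. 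Whitening the noise by left-multiplying with $\boldsymbol{K}^{-1/2}$, the mutual information is unchanged and the effective channel becomes $\tilde{\boldsymbol{H}} = \boldsymbol{K}^{-1/2}\boldsymbol{H}$; assuming $\tilde{\boldsymbol{H}}$ has full column rank $N_t$ (which should hold generically and can be stated as a standing assumption), the term $\tfrac12\log\det(\boldsymbol{H}^T\boldsymbol{K}^{-1}\boldsymbol{H}) = \tfrac12\log\det(\tilde{\boldsymbol{H}}^T\tilde{\boldsymbol{H}})$ is precisely the log-volume-scaling factor that a linear map with Jacobian $\tilde{\boldsymbol{H}}$ introduces into a differential entropy, which is why it appears additively in~\eqref{Eq:MIMOcase1_capacity}.

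Next I would establish the asymptotic expression itself. The lower bound follows from the entropy-power inequality: choosing $\boldsymbol{x}$ to have independent components, each distributed as the maximum-entropy (truncated-exponential) density on $[0,A_i]$ under its mean constraint, gives $h(\boldsymbol{x}) = \sum_i h(x_i)$, and then $I(\boldsymbol{x};\boldsymbol{y}) = h(\boldsymbol{H}\boldsymbol{x}+\boldsymbol{z}) - h(\boldsymbol{z}) \ge h(\boldsymbol{H}\boldsymbol{x}) - h(\boldsymbol{z})$ up to a vanishing correction as the $A_i \to \infty$ (the additive noise becomes negligible relative to the spread of $\boldsymbol{H}\boldsymbol{x}$), and $h(\boldsymbol{H}\boldsymbol{x}) = \tfrac12\log\det(\boldsymbol{H}^T\boldsymbol{H}) + \sum_i h(x_i)$ when $\boldsymbol{H}$ is square, with the obvious modification via a QR-type argument when $N_r > N_t$. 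The upper bound uses the standard duality / maximum-entropy argument: $h(\boldsymbol{y}) \le h(\boldsymbol{z}) + \tfrac12\log\det(\cdots) + \sum_i h(x_i) + o(1)$, and the per-component terms are each bounded by the one-dimensional amplitude-and-mean-constrained entropy maximizer. The key scalar computation is then the optimization $\max h(x_i)$ over densities on $[0,A_i]$ with $\mathbb{E}(x_i) \le \alpha A_i$; this is a classical variational problem whose solution is $p(x) = \mu^* e^{-\mu^* x/A_i}/(A_i(1-e^{-\mu^*}))$ for $\alpha < 1/2$ (the mean constraint active), yielding $h(x_i) = \log A_i - \tfrac{\log 2\pi e}{2} \cdot 0 \ldots$— more precisely $h(x_i) = \log A_i + \log\frac{1-e^{-\mu^*}}{\mu^*} + \mu^*\alpha$ after substitution — and the Lagrange stationarity condition is exactly~\eqref{Eq:MIMOcase1_mu}. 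Summing over $i$ and collecting constants reproduces $\chi(\alpha)$ in~\eqref{Eq:chi_1}; matching the lower and upper bounds as $A_i \to \infty$ gives the claimed equality.

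The main obstacle I anticipate is twofold. First, the non-square case $N_r \ge N_t$ requires care: the differential entropy $h(\boldsymbol{H}\boldsymbol{x})$ is not directly defined for a degenerate distribution supported on an $N_t$-dimensional subspace of $\mathbb{R}^{N_r}$, so I would instead work throughout with the reduced channel — either project onto the column space of $\tilde{\boldsymbol{H}}$ or use the identity $h(\boldsymbol{H}\boldsymbol{x}+\boldsymbol{z}) - h(\boldsymbol{z}) = h(\boldsymbol{x} + (\boldsymbol{H}^T\boldsymbol{K}^{-1}\boldsymbol{H})^{-1}\boldsymbol{H}^T\boldsymbol{K}^{-1}\boldsymbol{z}) - \tfrac12\log\det((\boldsymbol{H}^T\boldsymbol{K}^{-1}\boldsymbol{H})^{-1})$ type manipulations so that the determinant surfaces cleanly. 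Second, and more delicately, the sum-average constraint couples the components, so a priori the optimal input need not be product-form; I would argue that in the high-SNR limit the problem separates because the determinant term is constant in the input distribution and the entropy $h(\boldsymbol{x})$ subject to a sum-mean constraint is maximized by independent truncated-exponential marginals with a common Lagrange multiplier $\mu^*$ (by concavity of entropy and a symmetry/interchange argument), which is precisely why a single $\mu^*$ — rather than $N_t$ separate ones — governs~\eqref{Eq:MIMOcase1_mu} and why the $A_i$ enter only through $\sum_i \log A_i$. Tightening the $o(1)$ gap between the two bounds uniformly as all $A_i \to \infty$ is then a routine but necessary estimate, relying on the fact that adding bounded-variance noise changes differential entropy by a vanishing amount relative to a distribution whose support diameter grows without bound.
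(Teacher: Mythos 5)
Your proposal is correct and follows essentially the same route as the paper's Appendix A: reduce to the whitened/inverted channel so that $\tfrac12\log\det(\boldsymbol{H}^T\boldsymbol{K}^{-1}\boldsymbol{H})$ appears as a Jacobian term, lower-bound via the EPI with independent truncated-exponential (maximum-entropy) inputs whose stationarity condition is exactly~(\ref{Eq:MIMOcase1_mu}), upper-bound via the duality/maximum-entropy argument with a Lapidoth-style output law, and match the two bounds as $A_i\to\infty$; your scalar entropy computation $h(x_i)=\log A_i+\log\frac{1-e^{-\mu^*}}{\mu^*}+\mu^*\alpha$ indeed collapses to $\chi(\alpha)$ after using~(\ref{Eq:MIMOcase1_mu}). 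The only substantive difference is that the paper instantiates the upper bound with the explicit piecewise output density~(\ref{Eq:Ri}) and evaluates the KL divergence term by term, whereas you state the output-entropy bound more abstractly, but the underlying argument is the same.
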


\begin{proof}
	Proposition~\ref{pro1} can be proved by first deriving the tight lower and upper bounds on~(\ref{Eq:capacity}). 
	Their expressions and the detailed proof are given in Appendix~\ref{app:A}.
\end{proof}
	
Note that the asymptotic capacity in~(\ref{Eq:MIMOcase1_capacity}) depends on the peak optical intensity $\boldsymbol{A}$ and OIRS element alignments $\boldsymbol{G}$ and $\boldsymbol{F}$ (as subsumed in $\boldsymbol{H}$ via~(\ref{Eq:H2_generation})). 
In Case \uppercase\expandafter{\romannumeral2}, we first present the following lemma to simplify the exact capacity in~(\ref{Eq:capacity}).

\begin{lemma}
	\label{Lemma:thres1/2}
	When $\alpha > 1/2$, the capacity of the OIRS-assisted MIMO VLC in~(\ref{Eq:capacity}) can be simplified as
	\begin{equation}
		\setlength\abovedisplayskip{3pt}
		\label{Eq:MIMO_thres1/2_capacity}
		C(\boldsymbol{A}) = C\left(\boldsymbol{A}, \frac{\sum_{i = 1}^{N_t} A_i}{2}\right),
		\setlength\belowdisplayskip{3pt}
	\end{equation}
	with the capacity-achieving distribution $Q^*(\boldsymbol{x})$ satisfying
	\begin{equation}
		\setlength\abovedisplayskip{1pt}
		\label{Eq:1/2bestQ}
		\mathbb{E}_{Q^*}\left[ \boldsymbol{x} \right] = \frac{1}{2}\boldsymbol{A}.
		\setlength\belowdisplayskip{1pt}
	\end{equation}
\end{lemma}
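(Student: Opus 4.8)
The plan is to sandwich $C(\boldsymbol{A}, E)$ between two copies of $C\!\left(\boldsymbol{A}, \tfrac{1}{2}\sum_{i=1}^{N_t} A_i\right)$ whenever $\alpha = E/\sum_{i=1}^{N_t} A_i > 1/2$, and along the way to produce a capacity-achieving input with mean $\tfrac{1}{2}\boldsymbol{A}$. One direction is immediate: relaxing the total average-intensity budget in~(\ref{Eq:total_average constriant}) only enlarges the set of admissible input distributions in~(\ref{Eq:capacity}), so $C(\boldsymbol{A}, E)$ is non-decreasing in $E$; since $\alpha > 1/2$ means $E > \tfrac{1}{2}\sum_{i=1}^{N_t} A_i$, this gives $C(\boldsymbol{A}, E) \ge C\!\left(\boldsymbol{A}, \tfrac{1}{2}\sum_{i=1}^{N_t} A_i\right)$. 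All the content of the lemma is therefore in the reverse inequality, which asserts that once the average budget reaches half the peak budget, increasing it further is useless.

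For the reverse inequality I would use a symmetrization argument in the spirit of~\cite{lapidoth2009capacity,8006585}. Fix any input distribution $Q$ obeying only the peak constraint~(\ref{Eq:peak constriant}), i.e.\ $\boldsymbol{0}\preceq\boldsymbol{x}\preceq\boldsymbol{A}$ almost surely, and let $\tilde{Q}$ denote the law of $\boldsymbol{A}-\boldsymbol{x}$; clearly $\tilde{Q}$ also obeys~(\ref{Eq:peak constriant}). Because the channel~(\ref{Eq:signal model}) is affine in $\boldsymbol{x}$ and the noise $\boldsymbol{z}\sim N(\boldsymbol{0},\boldsymbol{K})$ is symmetric about the origin, the substitution $\boldsymbol{x}\mapsto\boldsymbol{A}-\boldsymbol{x}$ on the input together with $\boldsymbol{y}\mapsto\boldsymbol{H}\boldsymbol{A}-\boldsymbol{y}$ on the output carries the input/output joint law induced by $Q$ onto the one induced by $\tilde{Q}$ (using $-\boldsymbol{z}\overset{d}{=}\boldsymbol{z}$ and $\boldsymbol{H}(\boldsymbol{A}-\boldsymbol{x})-\boldsymbol{z}=\boldsymbol{H}\boldsymbol{A}-(\boldsymbol{H}\boldsymbol{x}+\boldsymbol{z})$); since mutual information is unchanged under bijective reparametrizations of the input and of the output applied separately, $I(\tilde{Q})=I(Q)$. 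Now set $\bar{Q}\triangleq\tfrac{1}{2}(Q+\tilde{Q})$. Mutual information is concave in the input distribution for a fixed channel, so $I(\bar{Q})\ge\tfrac{1}{2}I(Q)+\tfrac{1}{2}I(\tilde{Q})=I(Q)$; and by linearity of expectation $\mathbb{E}_{\bar{Q}}[\boldsymbol{x}]=\tfrac{1}{2}\big(\mathbb{E}_{Q}[\boldsymbol{x}]+\boldsymbol{A}-\mathbb{E}_{Q}[\boldsymbol{x}]\big)=\tfrac{1}{2}\boldsymbol{A}$, whence $\sum_{i=1}^{N_t}\mathbb{E}_{\bar{Q}}[x_i]=\tfrac{1}{2}\sum_{i=1}^{N_t}A_i$. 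Thus $\bar{Q}$ is admissible both for $(\boldsymbol{A},E)$ and for $(\boldsymbol{A},\tfrac{1}{2}\sum_{i=1}^{N_t}A_i)$, and taking the supremum over $Q$ yields $C(\boldsymbol{A},E)=\sup_{Q} I(Q)=\sup_{Q:\,\mathbb{E}_{Q}[\boldsymbol{x}]=\frac{1}{2}\boldsymbol{A}} I(Q)\le C\!\left(\boldsymbol{A},\tfrac{1}{2}\sum_{i=1}^{N_t}A_i\right)$, which is~(\ref{Eq:MIMO_thres1/2_capacity}). For~(\ref{Eq:1/2bestQ}): a capacity-achieving $Q^*$ exists since the input alphabet $\prod_{i=1}^{N_t}[0,A_i]$ is compact and $Q\mapsto I(Q)$ is upper semicontinuous, and replacing $Q^*$ by its symmetrization preserves optimality while forcing $\mathbb{E}_{Q^*}[\boldsymbol{x}]=\tfrac{1}{2}\boldsymbol{A}$.

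I expect the main obstacle to be the analytic bookkeeping rather than the idea: one must invoke concavity and (upper semi-)continuity of $Q\mapsto I(Q)$ together with compactness of the set of probability measures on the box $\prod_{i=1}^{N_t}[0,A_i]$ to justify both the existence of $Q^*$ and the fact that symmetrization stays within the admissible set, and one must handle the strict inequality in~(\ref{Eq:total_average constriant}) at the boundary value $E=\tfrac{1}{2}\sum_{i=1}^{N_t}A_i$, which is harmless because the supremum over the open feasible region equals that over its closure (equivalently, $\bar{Q}$ may be approximated by a slightly contracted copy with mean strictly below $\tfrac{1}{2}\boldsymbol{A}$). The affine-invariance of $I$ and the mean computation for $\bar{Q}$ are routine.
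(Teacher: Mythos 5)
Your proposal is correct and follows essentially the same route as the paper's Appendix~B: both hinge on the reflection $\boldsymbol{x}\mapsto\boldsymbol{A}-\boldsymbol{x}$ preserving mutual information and on symmetrizing the input with equal weights, the paper realizing your concavity step concretely via a Bernoulli switch $B$ and the inequality $h(\boldsymbol{H}\tilde{\boldsymbol{x}}+\boldsymbol{z})\ge h(\boldsymbol{H}\tilde{\boldsymbol{x}}+\boldsymbol{z}\,|\,B)$. Your write-up is somewhat more careful than the paper's on the routine points (monotonicity of $C$ in $E$, existence of $Q^*$, and the strict-inequality boundary at $E=\tfrac{1}{2}\sum_i A_i$), but these are refinements of the same argument rather than a different one.
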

\begin{proof}
	See Appendix~\ref{app:B}.
\end{proof}

Note that Lemma~\ref{Lemma:thres1/2} generalizes~\cite[Proposition 1]{8006585} as a special case, where the peak intensity at all transmitter antennas are assumed to be identical, i.e., $A_i=A,\ \forall i \in\mathcal{T}$. 
Based on Lemma~\ref{Lemma:thres1/2}, we can derive the asymptotic capacity expression in Case \uppercase\expandafter{\romannumeral2} as follows.

\begin{proposition}
	\label{pro2}
	In Case \uppercase\expandafter{\romannumeral2}, the capacity of the OIRS-assisted MIMO VLC in the high-SNR regime, i.e., $A_i \to \infty, \forall i$, is given by
	\begin{align}
	\setlength\abovedisplayskip{3pt}
	\label{Eq:MIMOcase2_capacity}
	C\left( \boldsymbol{G}, \boldsymbol{F}, \boldsymbol{A} \right) = \sum_{i = 1}^{N_t}\log A_i + \frac{1}{2}\log\det\left(\boldsymbol{H}^T\boldsymbol{K}^{-1}\boldsymbol{H}\right) + \chi\left( \alpha \right), \quad \alpha \in [\frac{1}{2},1],
	\setlength\belowdisplayskip{3pt}
	\end{align}
	where
	\begin{align}
		\setlength\abovedisplayskip{3pt}
		\label{Eq:chi_2}
		\chi\left( \alpha \right) \triangleq - \frac{N_t \log2\pi e}{2}.
		\setlength\belowdisplayskip{3pt}
	\end{align}
\end{proposition}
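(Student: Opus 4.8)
The plan is to mirror the structure used in the proof of Proposition~\ref{pro1}, but to exploit Lemma~\ref{Lemma:thres1/2} as the crucial simplification that collapses the problem onto the boundary case $\alpha=1/2$. First I would invoke Lemma~\ref{Lemma:thres1/2}: since $\alpha\in[1/2,1]$ the total average-intensity constraint~(\ref{Eq:total_average constriant}) is inactive, and the capacity equals $C(\boldsymbol{A},\tfrac{1}{2}\sum_i A_i)$, with the capacity-achieving input forced to have mean $\tfrac12\boldsymbol{A}$ by~(\ref{Eq:1/2bestQ}). Thus it suffices to characterize the high-SNR capacity of the MIMO VLC channel under the pure peak constraint~(\ref{Eq:peak constriant}) together with the symmetry-induced mean constraint $\mathbb{E}[x_i]=A_i/2$, exactly the regime where the per-antenna input is, up to an affine shift, supported on a centered interval $[-A_i/2,A_i/2]$.

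Next I would derive matching lower and upper bounds on the mutual information. For the lower bound I would take $\boldsymbol{x}$ with independent components, each $x_i$ uniform on $[0,A_i]$ (which automatically satisfies $\mathbb{E}[x_i]=A_i/2$), and apply the entropy-power inequality / the standard bound $I(\boldsymbol{x};\boldsymbol{y})\ge h(\boldsymbol{H}\boldsymbol{x}) - \tfrac12\log\det(2\pi e\boldsymbol{K})$, using $h(\boldsymbol{H}\boldsymbol{x}) = h(\boldsymbol{x}) + \log|\det(\text{effective }\boldsymbol{H})|$ when $\boldsymbol{H}$ has full column rank — this produces $\sum_i\log A_i + \tfrac12\log\det(\boldsymbol{H}^T\boldsymbol{K}^{-1}\boldsymbol{H}) - \tfrac{N_t}{2}\log 2\pi e$ plus vanishing terms as $A_i\to\infty$. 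For the upper bound I would use the maximum-entropy principle: among all distributions on $[0,A_i]$ the differential entropy is maximized by the uniform one, giving $h(\boldsymbol{x})\le \sum_i\log A_i$, and combine this with the standard estimate $I(\boldsymbol{x};\boldsymbol{y})\le h(\boldsymbol{y}) - \tfrac12\log\det(2\pi e\boldsymbol{K})$ along with a bound on $h(\boldsymbol{y})-h(\boldsymbol{H}\boldsymbol{x})$ that vanishes in the high-SNR limit. The two bounds coincide asymptotically, yielding~(\ref{Eq:MIMOcase2_capacity}) with $\chi(\alpha)=-\tfrac{N_t}{2}\log 2\pi e$; note this is consistent with Proposition~\ref{pro1}, since as $\alpha\to\tfrac12^-$ one checks $\mu^*\to 0$ in~(\ref{Eq:MIMOcase1_mu}) and $\log(1-\alpha\mu^*)+\mu^*(1-\alpha)\to 0$, so~(\ref{Eq:chi_1}) continuously reduces to~(\ref{Eq:chi_2}).

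The main obstacle I anticipate is not the bounding itself but handling the channel matrix $\boldsymbol{H}$ carefully: $\boldsymbol{H}=\boldsymbol{H}_1+\boldsymbol{H}_2$ need not be square, and the expression $\tfrac12\log\det(\boldsymbol{H}^T\boldsymbol{K}^{-1}\boldsymbol{H})$ only makes sense when $\rank(\boldsymbol{H})=N_t$, so the proof must either assume this full-rank condition or explain how the no-crosstalk structure of $\boldsymbol{H}_2$ (together with the alignment constraints~(\ref{Eq:Row summation})) secures it; one must also be careful that the affine shift $\boldsymbol{x}\mapsto\boldsymbol{x}-\tfrac12\boldsymbol{A}$ leaves differential entropies invariant while moving the support to the symmetric box, so that the peak and mean constraints are simultaneously honored. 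Once the full-rank issue is dispatched, the remaining steps are the routine high-SNR entropy estimates already deployed in Appendix~\ref{app:A}, and I would simply reference them rather than repeat the calculations.
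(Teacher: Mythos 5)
Your overall skeleton matches the paper's: invoke Lemma~\ref{Lemma:thres1/2} to reduce $\alpha\in[1/2,1]$ to the boundary case $\alpha=1/2$, then obtain the lower bound via the EPI with independent uniform inputs on $[0,A_i]$ (exactly the distribution in~(\ref{Eq:MIMOcase2_uniformDistri})), and match it with an upper bound in the high-SNR limit. The lower-bound half and the consistency check that $\chi(\alpha)$ in~(\ref{Eq:chi_1}) tends to $-\tfrac{N_t}{2}\log 2\pi e$ as $\mu^*\to 0$ are both sound and in line with Appendix~\ref{app:C}.

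The gap is in your upper bound. You propose $I(\boldsymbol{x};\boldsymbol{y})\le h(\boldsymbol{y})-\tfrac12\log\det(2\pi e\boldsymbol{K})$ together with ``a bound on $h(\boldsymbol{y})-h(\boldsymbol{H}\boldsymbol{x})$ that vanishes in the high-SNR limit'' and $h(\boldsymbol{x})\le\sum_i\log A_i$. This step fails as stated: the capacity is a supremum over \emph{all} admissible input laws, and the paper itself notes that the capacity-achieving $Q(\boldsymbol{x})$ is discrete, in which case $h(\boldsymbol{H}\boldsymbol{x})=-\infty$ and the decomposition $h(\boldsymbol{y})\le h(\boldsymbol{H}\boldsymbol{x})+o(1)$ is vacuous; more generally $h(\boldsymbol{H}\boldsymbol{x}+\boldsymbol{z})-h(\boldsymbol{H}\boldsymbol{x})$ is not uniformly small over the constraint set. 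This is precisely why the paper (following~\cite{lapidoth2009capacity,8006585}) uses the duality bound $C\le\sup_{Q}\mathbb{E}[\mathscr{D}(W(\boldsymbol{y}|\boldsymbol{x})\|R(\boldsymbol{y}))]$ with the test output law~(\ref{Eq:Ri}) (here with per-antenna peaks $A_i$ and $\mu^*$ degenerating at $\alpha=1/2$), after whitening the noise as in~(\ref{Eq:MIMO_mutualInfor}) to extract the $\tfrac12\log\det(\boldsymbol{H}^T\boldsymbol{K}^{-1}\boldsymbol{H})$ term. Replacing your $h(\boldsymbol{y})$-based argument with that duality bound closes the proof; your remarks on the full-column-rank assumption on $\boldsymbol{H}$ and on the affine recentering are otherwise consistent with what the paper does.
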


\begin{proof}
	Proposition~\ref{pro2} can be proved by first deriving the lower and upper bounds on~(\ref{Eq:MIMO_thres1/2_capacity}). 
	Their expressions and the detailed proof are given in Appendix~\ref{app:C}.
\end{proof}

Finally, in Case \uppercase\expandafter{\romannumeral3}, by applying the similar mathematical manipulations as in Propositions~\ref{pro1} and~\ref{pro2}, we can directly obtain its asymptotic capacity expression as follows.

\begin{proposition}
	In Case \uppercase\expandafter{\romannumeral3}, the capacity of the OIRS-assisted MIMO VLC in the high-SNR regime, i.e., $E_i \to \infty, \forall i$, is given by
	\begin{align}
	\setlength\abovedisplayskip{3pt}
	\label{Eq:MIMOcase3_capacity}
	C\left( \boldsymbol{G}, \boldsymbol{F}, \boldsymbol{E} \right) = \sum_{i = 1}^{N_t}\log E_i + \frac{1}{2}\log\det\left(\boldsymbol{H}^T\boldsymbol{K}^{-1}\boldsymbol{H}\right) - \frac{N_t}{2}\log\frac{2\pi N_t^2}{e}, \quad \alpha\to0.
	\setlength\belowdisplayskip{3pt}
	\end{align}
\end{proposition}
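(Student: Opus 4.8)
The plan is to reprise the two-sided bounding argument of Propositions~\ref{pro1} and~\ref{pro2}, now discarding the peak constraint~(\ref{Eq:peak constriant}) and keeping only the average-intensity constraints~(\ref{Eq:average constriant})--(\ref{Eq:total_average constriant}). First I would put the capacity~(\ref{Eq:capacity}) into the same normal form used in Appendices~\ref{app:A} and~\ref{app:C}: starting from $\boldsymbol{y}=\boldsymbol{H}\boldsymbol{x}+\boldsymbol{z}$ in~(\ref{Eq:signal model}) with $\boldsymbol{z}\sim N(\boldsymbol{0},\boldsymbol{K})$, whiten by $\boldsymbol{K}^{-1/2}$ and take a thin QR factorization $\boldsymbol{K}^{-1/2}\boldsymbol{H}=\boldsymbol{Q}\boldsymbol{R}$ (assuming, as in the earlier cases, $\rank(\boldsymbol{H})=N_t\le N_r$). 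Rotating by an orthogonal completion $[\boldsymbol{Q}\ \boldsymbol{Q}_\perp]$ and dropping the resulting pure-noise block, one obtains exactly $I(Q(\boldsymbol{x});W(\boldsymbol{y}|\boldsymbol{x}))=h(\boldsymbol{R}\boldsymbol{x}+\boldsymbol{w})-\tfrac{N_t}{2}\log 2\pi e$, where $\boldsymbol{w}\sim N(\boldsymbol{0},\boldsymbol{I}_{N_t})$ and $|\det\boldsymbol{R}|^{2}=\det(\boldsymbol{H}^{T}\boldsymbol{K}^{-1}\boldsymbol{H})$. Thus~(\ref{Eq:MIMOcase3_capacity}) will follow once $h(\boldsymbol{R}\boldsymbol{x}+\boldsymbol{w})$ is sandwiched up to an $o(1)$ error as $E_i\to\infty$.

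For the lower bound I would choose the input $\boldsymbol{x}$ with independent entries, each $x_i$ exponentially distributed with mean equal to its share of the average-intensity budget in~(\ref{Eq:average constriant})--(\ref{Eq:total_average constriant}), and apply the vector entropy-power inequality $e^{2h(\boldsymbol{R}\boldsymbol{x}+\boldsymbol{w})/N_t}\ge e^{2h(\boldsymbol{R}\boldsymbol{x})/N_t}+e^{2h(\boldsymbol{w})/N_t}$ together with $h(\boldsymbol{R}\boldsymbol{x})=h(\boldsymbol{x})+\log|\det\boldsymbol{R}|=\sum_i h(x_i)+\tfrac12\log\det(\boldsymbol{H}^{T}\boldsymbol{K}^{-1}\boldsymbol{H})$ and $h(x_i)=\log(e\,\mathbb{E}[x_i])$. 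Since $\boldsymbol{R}\boldsymbol{x}$ has a density whose scale diverges as $E_i\to\infty$, the Gaussian term in the EPI is asymptotically negligible, leaving the lower bound equal to $\sum_i h(x_i)+\tfrac12\log\det(\boldsymbol{H}^{T}\boldsymbol{K}^{-1}\boldsymbol{H})-\tfrac{N_t}{2}\log 2\pi e$ plus $o(1)$.

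For the upper bound I would follow the Lapidoth--Moser--Wigger duality step used in Appendices~\ref{app:A} and~\ref{app:C}: upper-bound $h(\boldsymbol{R}\boldsymbol{x}+\boldsymbol{w})$ by substituting a carefully chosen product output density, and then use that a non-negative scalar with mean $\mu$ has differential entropy at most $\log(e\mu)$, so that $h(\boldsymbol{x})\le\sum_i\log(e\,\mathbb{E}[x_i])$. Maximizing $\sum_i\log\mathbb{E}[x_i]$ over the means permitted by~(\ref{Eq:average constriant})--(\ref{Eq:total_average constriant}) then yields a matching upper bound whose high-SNR limit coincides with that of the lower bound. Collecting terms and rewriting the residual constant $N_t-\tfrac{N_t}{2}\log 2\pi e$, after the optimal per-LED means are substituted, as $-\tfrac{N_t}{2}\log\tfrac{2\pi N_t^{2}}{e}$ gives~(\ref{Eq:MIMOcase3_capacity}).

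The step I expect to be the real obstacle is the upper bound, not the lower one: after the QR reduction the effective input $\boldsymbol{R}\boldsymbol{x}$ entangles the per-antenna non-negativity constraints across streams, so the channel cannot simply be parallelized, and a tight (to within $o(1)$) bound on $h(\boldsymbol{R}\boldsymbol{x}+\boldsymbol{w})$ hinges on the right choice of dual output distribution and the optimal allocation of the average intensity over the $N_t$ LEDs --- which is precisely where the $N_t^{2}$ inside the logarithm of~(\ref{Eq:MIMOcase3_capacity}) comes from. Everything else is a direct transcription of the computations already carried out for Propositions~\ref{pro1} and~\ref{pro2}.
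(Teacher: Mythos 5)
Your proposal follows the same route the paper prescribes for this proposition: the paper gives Case~III no dedicated appendix and simply states that it follows ``by applying the similar mathematical manipulations as in Propositions 1 and 2,'' i.e., a duality-based upper bound with a chosen product output law and an EPI lower bound with the maximum-entropy input, carried out in coordinates where the noise contributes $-\frac{N_t}{2}\log 2\pi e+\frac{1}{2}\log\det(\boldsymbol{H}^T\boldsymbol{K}^{-1}\boldsymbol{H})$. Your QR reduction is a cosmetic variant of the paper's $\boldsymbol{z}'=\boldsymbol{H}^{-1}\boldsymbol{z}$ substitution in Appendix~A, and your i.i.d. exponential input with $h(x_i)=\log(e\,\mathbb{E}[x_i])$ is exactly the Case~III analogue of the truncated-exponential maximum-entropy law used there.

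The one place your sketch does not close---and you flag it yourself---is the constant. With $h(x_i)=\log(e\,\mathbb{E}[x_i])$ and the means maximized subject to $\mathbb{E}[x_i]\leq E_i$ (which gives $\mathbb{E}[x_i]=E_i$), the residual offset is $N_t-\frac{N_t}{2}\log 2\pi e=-\frac{N_t}{2}\log\frac{2\pi}{e}$, which differs from the stated $-\frac{N_t}{2}\log\frac{2\pi N_t^2}{e}$ by $N_t\log N_t$; the stated constant corresponds to effective per-LED means of $E_i/N_t$ rather than $E_i$. So the ``rewriting of the residual constant'' you invoke is not an identity: the $N_t^2$ has to be produced either by the normalization of the dual output law in the upper bound or by how the total-average constraint is actually allocated across LEDs, and neither your sketch nor the paper makes this step explicit. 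Apart from this bookkeeping of the offset, the argument is the intended one.
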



Based on the above, we summarize the asymptotic capacity under the above three cases in Table~\ref{Tab:capacity}. 
It is observed that the capacity results in these three cases are all given by the sum of a specific constant term and a common term below,
\begin{equation}
	\label{Eq:objective}
	f\left( \boldsymbol{G}, \boldsymbol{F}, \boldsymbol{X} \right) \triangleq \frac{1}{2}\log\det\left(\boldsymbol{H}^T\boldsymbol{K}^{-1}\boldsymbol{H}\right) + \sum_{i = 1}^{N_t}\log X_i, \quad \boldsymbol{X} \in \{\boldsymbol{A}, \boldsymbol{E}\}.
\end{equation}
Therefore, to maximize the asymptotic capacity of the OIRS-assisted MIMO VLC in the high-SNR regime, it is equivalent to maximize the common term $f\left( \boldsymbol{G}, \boldsymbol{F}, \boldsymbol{X} \right)$ in~(\ref{Eq:objective}) in all considered cases, by jointly optimizing the OIRS's element alignment $\boldsymbol{G}$ and $\boldsymbol{F}$, as well as the intensity vector $\boldsymbol{A}$ or $\boldsymbol{E}$, as will be studied in the next section.

\renewcommand\arraystretch{1.5}
\begin{table*}[t]
	\centering
	\small
	\caption{Summary of the OIRS-assisted MIMO VLC capacities in the high-SNR regime.}
	\vspace{-0.3cm}
	\begin{tabular}{| l | c | c | c |}
		\hline
		Case & \uppercase\expandafter{\romannumeral1} & \uppercase\expandafter{\romannumeral2} & \uppercase\expandafter{\romannumeral3} \\
		\hline
		Intensity constraint & \multicolumn{2}{c|}{Peak intensity and average intensity constraints} & Only average intensity constraint \\
		\hline
		Range of $\alpha$ & $\alpha\in (0,1/2)$ & $\alpha\in [1/2,1]$ & $\alpha\to0$ \\
		\hline
		Capacity results & \tabincell{l}{$f\left( \boldsymbol{G}, \boldsymbol{F}, \boldsymbol{A} \right) + \chi\left( \alpha \right)$} & \tabincell{l}{$f\left( \boldsymbol{G}, \boldsymbol{F}, \boldsymbol{A} \right) + \chi\left( \alpha \right)$} & \tabincell{l}{$f\left( \boldsymbol{G}, \boldsymbol{F}, \boldsymbol{E} \right) - \frac{N_t}{2}\log\frac{2\pi N_t^2}{e}$} \\
		\hline
		Shared common term & \multicolumn{3}{c|}{$\frac{1}{2}\log\det\left(\boldsymbol{H}^T\boldsymbol{K}^{-1}\boldsymbol{H}\right) + \sum_{i = 1}^{N_t}\log X_i $ }\\
		\hline
	\end{tabular}
	\label{Tab:capacity}
\end{table*}
\setlength{\textfloatsep}{0.5cm}

\vspace{-0.4cm}
\section{OIRS-Assisted MIMO VLC Capacity Maximization}
\vspace{-0.1cm}
\label{Sec:Proposed}
In this section, we formulate the capacity maximization problem for the OIRS-assisted MIMO VLC system to maximize~(\ref{Eq:objective}), and solve it accordingly.

\vspace{-0.4cm}
\subsection{Problem Formulation}
\vspace{-0.1cm}
\label{Subsec:MIMO formulation}
For convenience, we only consider Cases \uppercase\expandafter{\romannumeral1} and \uppercase\expandafter{\romannumeral2} in this section, with $\boldsymbol{X}=\boldsymbol{A}$ in~(\ref{Eq:objective}), while the proposed algorithms are also applicable to Case \uppercase\expandafter{\romannumeral3}, by replacing $\boldsymbol{A}$ with $\boldsymbol{E}$. 
In view of the illumination requirements in VLC, we consider different peak power values for the LEDs at the transmitter, so as to cater to different human-centric activities. 
For example, reading books and using computers request different light intensities from 30 lx to 500 lx.
Let $A_{i,\max}$ denote the maximum emission power of the $i$th LED. 
Then, we have
\begin{equation}
	\label{Eq:MIMO_power_constraint}
	\boldsymbol{A} \preceq \boldsymbol{A}_{\max},
\end{equation}
where $ \boldsymbol{A}_{\max} \triangleq [A_{1,\max}, \cdots, A_{n,\max}] \in \mathbb{R}_{N_t\times 1}^+$ denotes the vector of the maximum emission power of the LED array.
Moreover, as the total emission power of the VLC system is generally constrained for saving energy, we further consider the following total power constraint,
\begin{equation}
	\setlength\abovedisplayskip{3pt}
	\label{Eq:MIMO_totalpower_constraint}
	\sum_{i = 1}^{N_t}A_i \leq A_{\text{total}},
	\setlength\belowdisplayskip{3pt}
\end{equation}
where $A_{\text{total}}$ denotes the total emission power of the $N_t$ LEDs.
Based on the above, the capacity maximization problem of the OIRS-assisted MIMO VLC can be formulated as
\begin{align}
	\setlength\abovedisplayskip{3pt}
	\text{(P0):}\ \max\limits_{\begin{subarray}{c}
			\boldsymbol{G}, \boldsymbol{F}, \boldsymbol{A}
	\end{subarray}}\ &\  f\left( \boldsymbol{G}, \boldsymbol{F}, \boldsymbol{A} \right) \label{P:objective}\\
	\text{s.t.}\ 
	& f_{n,j},\ g_{n,i}\in\{0,1\}, \quad \forall\ \! i \in \mathcal{T},\ j \in \mathcal{R}, \label{P:dis} \\
	& \sum_{i=1}^{N_t}g_{n,i} \leq 1, \quad \forall\ \! n \in \mathcal{N}, \label{P:g_row}\\
	& \sum_{j=1}^{N_r}f_{n, j} \leq 1, \quad \forall\ \! n \in \mathcal{N}, \label{P:f_row}\\
	& \boldsymbol{A} \preceq \boldsymbol{A}_{\max}, \label{P:power} \\
	& \sum_{i = 1}^{N_t}A_i \leq A_{\text{total}}. \label{P:total_power}
	\setlength\belowdisplayskip{3pt}
\end{align}

It is noted that~(\ref{Eq:objective}) is challenging to be optimally solved for the following reasons. 
Firstly, its objective function is non-concave in $\boldsymbol{G}$, $\boldsymbol{F}$, and $\boldsymbol{A}$. 
Secondly, it involves integer programming due to the constraints in~(\ref{P:dis}). 
Nonetheless, it is also noted that the OIRS element alignment ($\boldsymbol{G}$ and $\boldsymbol{F}$) and the transmitter emission power vector ($\boldsymbol{A}$) can be decoupled in the two terms $\log\det(\boldsymbol{H}^T\boldsymbol{K}^{-1}\boldsymbol{H})$ and $\sum_{i = 1}^{N_t}\log A_i$, respectively.
Therefore, (P0) can be decoupled into two subproblems and solved separately.
More specifically, the subproblem w.r.t. the OIRS element alignment can be expressed as
\vspace{-0.3cm}
\begin{align}
	\setlength\abovedisplayskip{3pt}
	\text{(P1):}\ \max\limits_{\begin{subarray}{c}
			\boldsymbol{G}, \boldsymbol{F}
	\end{subarray}}\ & f_1\left( \boldsymbol{G}, \boldsymbol{F} \right) \triangleq \log\det\left(\boldsymbol{H}^T\boldsymbol{K}^{-1}\boldsymbol{H}\right) \label{P1:objective}\\
	\text{s.t.}\ &~(\ref{P:dis})-(\ref{P:f_row}),
	\setlength\belowdisplayskip{3pt}
\end{align}
while the subproblem w.r.t. the transmitter emission power is given by
\vspace{-0.3cm}
\begin{align}
	\setlength\abovedisplayskip{3pt}
	\text{(P2):}\ \max\limits_{\begin{subarray}{c}
			\boldsymbol{A}
	\end{subarray}}\ & f_2\left( \boldsymbol{A} \right) \triangleq \sum_{i = 1}^{N_t}\log A_i \label{P2:objective}\\
	\text{s.t.}\ &~(\ref{P:power}),~(\ref{P:total_power}).
	\setlength\belowdisplayskip{3pt}
\end{align}
Next, we present our proposed algorithms for solving (P1) and (P2), respectively.


\vspace{-0.2cm}
\subsection{Proposed Solution to (P1)}
Note that (P1) is an integer programming problem, which lacks a universal polynomial-time solving algorithm in general. 
In the following, we propose two suboptimal algorithms, namely, LIP and LDAO, to solve (P1), which can strike a flexible balance between the performance versus complexity. 
Specifically, the LIP algorithm replaces the OIRS element alignment matrices, $\boldsymbol{G}$ and $\boldsymbol{F}$, with an auxilliary matrix $\boldsymbol{V}$ and searches the solution along the gradient direction based on the prior location information.
While the LDAO algorithm optimizes $\boldsymbol{G}$ and $\boldsymbol{F}$ in an alternate manner by exploiting the log-det heuristic (LDH)~\cite{fazel2003log}. 
Their detailed procedures are provided below.

\textbf{1) LIP algorithm}: 
Firstly, we introduce an auxillary matrix variable $\boldsymbol{V} \triangleq [ \boldsymbol{v}_1, \cdots, \boldsymbol{v}_{N_tN_r} ] \in\{0,1\}_{N\times N_tN_r}$ as~\cite{sun2022mimo}
\begin{equation}
	\setlength\abovedisplayskip{3pt}
	\label{Eq:v_vector}
	\boldsymbol{v}_{j + (i - 1)N_r} = \boldsymbol{f}_{j} \odot \boldsymbol{g}_{i}, \quad \forall\ \! i \in \mathcal{T},\ j \in \mathcal{R}.
	\setlength\belowdisplayskip{3pt}
\end{equation}
Note that the indices of the non-zero entries in~(\ref{Eq:v_vector}) (if any) indicate the indices of the OIRS reflecting elements aligning with the $i$-th LED and the $j$-th PD. 
Then, the end-to-end channel from the transmitter to the receiver can be rewritten as
\begin{equation}
	\setlength\abovedisplayskip{3pt}
	\label{Eq:MIMO_Vlinear}
	\text{vec}\left( \boldsymbol{H} \right) = \text{vec}\left( \boldsymbol{H}_1 \right) + \text{diag}\left( \boldsymbol{H}_c \right)^T\text{vec}\left( \boldsymbol{V} \right),
	\setlength\belowdisplayskip{3pt}
\end{equation}
where $\boldsymbol{H}_c \triangleq [ \boldsymbol{h}_{1,1}^{(2)}, \cdots, \boldsymbol{h}_{N_r,1}^{(2)}, \boldsymbol{h}_{1,2}^{(2)}, \cdots, \boldsymbol{h}_{N_r,N_t}^{(2)} ]$.
It is observed that~(\ref{Eq:MIMO_Vlinear}) is linear in $\boldsymbol{V}$.

Hence, (P1) is transformed into the following equivalent problem as
\vspace{-0.3cm}
\begin{align}
	\setlength\abovedisplayskip{3pt}
	\text{(P1-a):}\ \max\limits_{\begin{subarray}{c}
			\boldsymbol{V}
	\end{subarray}}\  &\log\det\left(\boldsymbol{H}^T\boldsymbol{K}^{-1}\boldsymbol{H}\right) \label{P1-a:objective}\\
	\text{s.t.}\ 
	& v_{n,p}\in\{0,1\}, \quad \forall\ \! n \in \mathcal{N},\ p \in \mathcal{P}, \label{P1-a:dis}\\
	& \sum_{p=1}^{N_tN_r}v_{n,p} \leq 1, \quad \forall\ \! n \in \mathcal{N}. \label{P1-a:g_row}
	\setlength\belowdisplayskip{3pt}
\end{align}
To tackle this problem, we relax $\boldsymbol{V}$ into continuous variables and solve this relaxed problem, and then reconstruct the optimized solution in discrete form via the minimum distance criterion.
Specifically, by relaxing the entries of $\boldsymbol{V}$ into continuous variables between $0$ and $1$, (P1-a) becomes
\begin{align}
	\setlength\abovedisplayskip{3pt}
	\text{(P1-a-$r$):}\ \max\limits_{\boldsymbol{V}}&\  \log\det\left(\boldsymbol{H}^T\boldsymbol{K}^{-1}\boldsymbol{H}\right) \label{P-a:objective}\\
	\text{s.t.}\ 
	& 0\leq v_{n,p}, \quad \forall\ \! n \in \mathcal{N},\ p \in \mathcal{P}, \label{P-a:dis}\\
	& \sum_{p=1}^{N_tN_r}v_{n,p} \leq 1, \quad \forall\ \! n \in \mathcal{N}, \label{P-a:g_row}
	\setlength\belowdisplayskip{3pt}
\end{align}
where $\mathcal{P} \triangleq \{1,\cdots,N_tN_r\}$ denotes the index set of the entries of $\boldsymbol{V}$.

Although the (P1-a-$r$) is still a non-convex problem due to its non-concave objective function~\cite{Upperbound}, we can still apply the interior-point algorithm to solve it, which may converge to a locally optimal solution.
To this end, the constraints in~(\ref{P-a:dis}) and~(\ref{P-a:g_row}) can serve as the barrier of the objective function~\cite{Boyd}, based on which (P1-a-$r$) becomes an unconstrained optimization problem, i.e.,
\begin{align}
	\setlength\abovedisplayskip{3pt}
	\label{uncon}
	\max\limits_{\boldsymbol{V}}\  t\log\det\left(\boldsymbol{H}^T\boldsymbol{K}^{-1}\boldsymbol{H}\right) +\boldsymbol{1}^T \log\text{vec}\left(\boldsymbol{V}\right) 
	+ \boldsymbol{1}^T \log\text{vec}\left(\boldsymbol{1}_{N\times1}-\boldsymbol{V}\boldsymbol{1}_{N_tN_r\times1}\right),
	\setlength\belowdisplayskip{3pt}
\end{align}
where $t$ denotes a hyperparameter that changes during the iteration.
To solve~(\ref{uncon}), we first calculate the derivative of $\log\det(\boldsymbol{H}^T\boldsymbol{K}^{-1}\boldsymbol{H})$ w.r.t. $\boldsymbol{V}$, i.e., $\nabla_{\boldsymbol{V}}\log\det(\boldsymbol{H}^T\boldsymbol{K}^{-1}\boldsymbol{H})$, which can be obtained by deriving its matrix differential w.r.t. an intermediate variable $\boldsymbol{H}$ as
\begin{align}
	\setlength\abovedisplayskip{3pt}
	\label{Eq:MIMO_derivative}
	d\left( \log\det(\boldsymbol{H}^T\boldsymbol{K}^{-1}\boldsymbol{H}) \right) &\overset{(a)}{=}\text{Tr}\{\left(\boldsymbol{H}^T\boldsymbol{K}^{-1}\boldsymbol{H}\right)^{-1}d\left(\boldsymbol{H}^T\boldsymbol{K}^{-1}\boldsymbol{H}\right)\} \notag\\
	&\overset{(b)}{=}\text{Tr}\left\{\left(\boldsymbol{H}^T\boldsymbol{K}^{-1}\boldsymbol{H}\right)^{-1}\left( d\boldsymbol{H}^T\boldsymbol{K}^{-1}\boldsymbol{H} + \boldsymbol{H}^T\boldsymbol{K}^{-1}d\boldsymbol{H} \right)\right\} \notag\\
	&\overset{(c)}{=}2\text{Tr}\left\{ \left(\boldsymbol{H}^T\boldsymbol{K}^{-1}\boldsymbol{H}\right)^{-1}\boldsymbol{H}^T\boldsymbol{K}^{-1}d\boldsymbol{H}\right\},
	\setlength\belowdisplayskip{3pt}
\end{align}
where $(a)$ and $(b)$ exploit the differential property of log-determinant and quadratic matrices, respectively, and $(c)$ is due to the trace equality $\text{Tr}(\boldsymbol{B}\boldsymbol{C}) = \text{Tr}(\boldsymbol{C}\boldsymbol{B})$.
Thus, we have $\nabla_{\boldsymbol{H}}\log\det(\boldsymbol{H}^T\boldsymbol{K}^{-1}\boldsymbol{H}) = 2\boldsymbol{K}^{-1}\boldsymbol{H}(\boldsymbol{H}^T\boldsymbol{K}^{-1}\boldsymbol{H})^{-1}$ from~(\ref{Eq:MIMO_derivative}) based on $d(\log\det(\boldsymbol{H}^T\boldsymbol{K}^{-1}\boldsymbol{H}))\\ = \text{Tr}\{(\nabla_{\boldsymbol{H}}\log\det(\boldsymbol{H}^T\boldsymbol{K}^{-1}\boldsymbol{H}))^Td\boldsymbol{H}\}$.
By further applying the chain rule, we can obtain
\vspace{-0.1cm}
\begin{align}
	\setlength\abovedisplayskip{3pt}
	\label{Eq:MIMO_VDeriva}
	\nabla_{\boldsymbol{V}}\log\det(\boldsymbol{H}^T\boldsymbol{K}^{-1}\boldsymbol{H}) 
	& = {\frac{\partial \text{vec}\left(\boldsymbol{H}\right)}{\partial \text{vec}\left(\boldsymbol{V}\right)}}^T\frac{\partial \log\det(\boldsymbol{H}^T\boldsymbol{K}^{-1}\boldsymbol{H}) }{\partial \text{vec}\left(\boldsymbol{H}\right)} \notag\\ 
	& = 2\text{diag}\left( \boldsymbol{H}_c \right) \text{vec}\left(\boldsymbol{K}^{-1}\boldsymbol{H}\left(\boldsymbol{H}^T\boldsymbol{K}^{-1}\boldsymbol{H}\right)^{-1}\right).
	\setlength\belowdisplayskip{3pt}
\end{align}
Then, the search direction for updating $\boldsymbol{V}$ in the interior-point algorithm can be obtained based on~(\ref{Eq:MIMO_VDeriva}) and its stepsize can be determined by the backtracking line search~\cite{Boyd}.
By this means, the iteration of the interior-point algorithm will converge to a locally optimal solution, after which we use a larger $t$ and resolve~(\ref{uncon}) until convergence.

\begin{algorithm}[t]
	\caption{Proposed LIP Algorithm}
	\label{solve_P1}
	\textbf{Input:} $\boldsymbol{H}_1$, $\boldsymbol{H}_c$\\
	\textbf{Output:} $\boldsymbol{F}$, $\boldsymbol{G}$
	\begin{algorithmic}[1]
		\STATE Initialize the entries of $\boldsymbol{G}$ and $\boldsymbol{F}$ as \\
		 $\boldsymbol{G}: g_{n,i} \gets 1, \text{if}\ i = \arg\min\limits_{i} d_{n,i}^1$; $g_{n,k} \gets 0, \forall k\neq i$;\\
		 $\boldsymbol{F}: f_{n,j} \gets 1, \text{if}\ j = \arg\min\limits_{j} d_{j,n}^2$; $f_{n,k} \gets 0, \forall k\neq j$;\\
		 
		 \STATE Initialize the entries of $\boldsymbol{V}$ based on~(\ref{Eq:v_vector});
		
		\REPEAT
		\STATE Calculate the search direction based on~(\ref{Eq:MIMO_VDeriva});
		\STATE Calculate the stepsize by backtracking line search;
		\STATE Update $\boldsymbol{V}$ based on the above search direction and stepsize;
		\UNTIL{Convergence}
		\STATE  Set $n \gets 1$;
		\REPEAT
		\STATE Calculate $p^*$ based on~(\ref{bestp});
		\STATE Calculate $i^* = \lfloor p^*-1/N \rfloor + 1$ and set $g_{n,i^*}$ to 1;
		\STATE Calculate $j^* = \text{mod}( p^*-1, N ) + 1$ and set $f_{n,j^*}$ to 1;
		\STATE Update the index as $n \gets n + 1$;
		\UNTIL{$n > N$}
	\end{algorithmic}
\end{algorithm}
\setlength{\textfloatsep}{0.5cm}

Note that the above process of the interior-point algorithm can be efficiently implemented based on the off-the-shelf $\textit{fmincon}$ function in MATLAB~\cite{lopez2014matlab}.
Nonetheless, the performance and convergence rate of the LIP rely heavily on the initial point~\cite{Boyd}.
According to the Lambertian radiant model, the path loss of the OIRS-reflected channel is determined by the transmitter-OIRS distance and the OIRS-receiver distance, as well as the angles of irradiance and incidence.
Therefore, we propose an initialization scheme based on the prior location information on the LED and PD. 
In particular, we consider that each OIRS reflecting element is initially aligned with the nearest LED and PD, based on which the entries of $\boldsymbol{V}$ can be calculated.

Finally, after the LIP converges, we need to discretize the entries of the optimized $\boldsymbol{V}$ into binary variables, and then retrieve $\boldsymbol{G}$ and $\boldsymbol{F}$ based on~(\ref{Eq:v_vector}).
Let $\boldsymbol{e}_0^{k}$ be an all-zero vector of size $1\times k$ and $\boldsymbol{e}_i^{k} = [0,\cdots,1,\cdots,0]$ be the vector by setting the $i$-th element of $\boldsymbol{e}_0^{k}$ to one.
Moreover, we denote the $n$th row of the optimized $\boldsymbol{V}$ as $\tilde{\boldsymbol{v}}_n \in \mathbb{R}_{1\times N_tN_r}^+$.
Then, for the $n$th OIRS reflecting element, by letting
\begin{equation}
	\setlength\abovedisplayskip{3pt}
	\label{bestp}
	p^* = \arg\min\limits_{p} \left\| \tilde{\boldsymbol{v}}_n - \boldsymbol{e}_p^{N_tN_r} \right\|_2^2,
	\setlength\belowdisplayskip{3pt}
\end{equation}
we can set $g_{n,i^*}=f_{n,j^*}=1$ with $i^*= \lfloor p^*-1/N \rfloor + 1$ and $j^* = \text{mod}( p^*-1, N ) + 1$, respectively. 
The above procedures of the LIP algorithm are summarized in \textbf{Algorithm}~\ref{solve_P1}.

\textbf{2) LDAO algorithm}:
In the LDAO algorithm, we alternately optimize $\boldsymbol{G}$ and $\boldsymbol{F}$ with the other being fixed until convergence.
Firstly, we consider the optimization of $\boldsymbol{G}$ with a fixed $\boldsymbol{F}$.
Let $\boldsymbol{F}_n \triangleq \text{diag}([f_{n,1},\cdots,f_{n,N_r}])$ and $\boldsymbol{G}_n \triangleq \text{diag}([g_{n,1},\cdots,g_{n,N_t}])$ denote the diagonal matrices with their diagonal elements being the $n$th rows of $\boldsymbol{G}$ and $\boldsymbol{F}$, respectively. 
Then, the OIRS-reflected MIMO channel can be rewritten as
\begin{equation}
	\setlength\abovedisplayskip{3pt}
	\label{Eq:H2_reformulated}
	\boldsymbol{H}_2 = \sum_{n = 1}^{N} \boldsymbol{F}_n \tilde{\boldsymbol{H}}_n \boldsymbol{G}_n,
	\setlength\belowdisplayskip{3pt}
\end{equation}
with $\tilde{\boldsymbol{H}}_n \in \mathbb{R}_{N_r\times N_t}^+$ and $[\tilde{\boldsymbol{H}}_n]_{i,j} = h_{i,n,j}^{(2)}$.
Then, for any given $\boldsymbol{F}$ (or $\boldsymbol{F}_n$'s), the subproblem for optimizing $\boldsymbol{G}$ (or $\boldsymbol{G}_n$'s) can be expressed as
\vspace{-0.3cm}
\begin{align}
	\text{(P1-b):}\ \max\limits_{\begin{subarray}{c}
			\{\boldsymbol{G}_n\}
	\end{subarray}}\  & \log\det\left(\left(\boldsymbol{H}_1+\sum_{n = 1}^{N} \boldsymbol{F}_n \tilde{\boldsymbol{H}}_n \boldsymbol{G}_n\right)^T\boldsymbol{K}^{-1}\left(\boldsymbol{H}_1+\sum_{n = 1}^{N} \boldsymbol{F}_n \tilde{\boldsymbol{H}}_n \boldsymbol{G}_n\right)\right) \label{P1-b:objective}\\
	\text{s.t.}\ 
	& \text{Tr}\left( \boldsymbol{G}_n \right) \leq 1, \quad \forall\ \! n \in \mathcal{N} \label{P1-b:GnSum},\\
	& 0\leq g_{n,i}, \quad \forall\ \! n \in \mathcal{N}, i \in \mathcal{T}, \label{P1-b:g_dis}
\end{align}
which is still intractable since the objective function in~(\ref{P1-b:objective}) is non-concave w.r.t. $\boldsymbol{G}_n$.

Note that $\boldsymbol{K}^{-1}$ in the objective function of (P1-b) is a symmetric positive-definite matrix. 
As such, we can invoke the Cholesky decomposition to decompose it as $\boldsymbol{K}^{-1} = \boldsymbol{S}^T\boldsymbol{S}$, where $\boldsymbol{S}$ is an upper triangular matrix with positive diagonal entries. 
Then, the objective function of (P1-b) can be re-expressed as
\begin{equation}
	\label{logdet}
	\tilde g(\{\boldsymbol{G}_n\}) \triangleq \log\det\left(\boldsymbol{S}\left(\boldsymbol{H}_1+\sum_{n = 1}^{N} \boldsymbol{F}_n \tilde{\boldsymbol{H}}_n \boldsymbol{G}_n\right)\right)^T\left(\boldsymbol{S}\left(\boldsymbol{H}_1+\sum_{n = 1}^{N} \boldsymbol{F}_n \tilde{\boldsymbol{H}}_n \boldsymbol{G}_n\right)\right).
\end{equation} 
Although~(\ref{logdet}) is still non-concave in $\boldsymbol{G}_n$~\cite{Upperbound}, it is expressed as a log-det function of a Gram matrix, which can be successively approximated by a quadratic upper bound~\cite{fazel2003log}, as presented in the following lemma.

\vspace{-0.2cm}
\begin{lemma}
	\label{Lemma:upper bound}
	 For any matrices $\tilde{\boldsymbol{A}}$ and $\tilde{\boldsymbol{B}} \in \mathbb{R}_{N_r\times N_t}$, we have the following log-det inequality~\cite{Upperbound}
	\begin{align}
		\label{Eq:upper bound}
		\!\!\! \log\det\left(\tilde{\boldsymbol{A}}^T\tilde{\boldsymbol{A}} + \Delta \boldsymbol{I}_{N_t}\right) \leq  \log\det\left(\tilde{\boldsymbol{B}}^T\tilde{\boldsymbol{B}} + \Delta \boldsymbol{I}_{N_t}\right) + \text{Tr}\left( \left(\tilde{\boldsymbol{B}}^T\tilde{\boldsymbol{B}} + \Delta \boldsymbol{I}_{N_t}\right)^{-1}\tilde{\boldsymbol{A}}^T\tilde{\boldsymbol{A}} \right) + U,
	\end{align}
	where $\Delta$ denotes a small quantity and $U$ denotes a constant.
\end{lemma}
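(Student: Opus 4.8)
The plan is to obtain (\ref{Eq:upper bound}) as the first-order (tangent-plane) overestimate furnished by the concavity of the log-determinant on the cone of symmetric positive-definite matrices. First I would observe that, since $\Delta$ is a strictly positive quantity, both $\tilde{\boldsymbol{A}}^T\tilde{\boldsymbol{A}} + \Delta\boldsymbol{I}_{N_t}$ and $\tilde{\boldsymbol{B}}^T\tilde{\boldsymbol{B}} + \Delta\boldsymbol{I}_{N_t}$ are symmetric and positive-definite (the Gram matrices $\tilde{\boldsymbol{A}}^T\tilde{\boldsymbol{A}}$ and $\tilde{\boldsymbol{B}}^T\tilde{\boldsymbol{B}}$ being positive-semidefinite), so the map $\phi(\boldsymbol{X}) \triangleq \log\det(\boldsymbol{X})$ is well defined and differentiable at both matrices, with gradient $\nabla\phi(\boldsymbol{X}) = \boldsymbol{X}^{-1}$.

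Next I would invoke the standard fact that $\phi$ is concave over the positive-definite cone: along any ray $\boldsymbol{X}+t\boldsymbol{Z}$ with $\boldsymbol{X}$ positive-definite and $\boldsymbol{Z}$ symmetric, one has $\log\det(\boldsymbol{X}+t\boldsymbol{Z}) = \log\det(\boldsymbol{X}) + \sum_{k}\log(1+t\lambda_k)$, where $\lambda_k$ are the eigenvalues of $\boldsymbol{X}^{-1/2}\boldsymbol{Z}\boldsymbol{X}^{-1/2}$, and each summand is concave in $t$; differentiating at $t=0$ also confirms $\nabla\phi(\boldsymbol{X}) = \boldsymbol{X}^{-1}$. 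Concavity then yields the supporting-hyperplane inequality $\phi(\boldsymbol{X}) \leq \phi(\boldsymbol{Y}) + \text{Tr}\!\left(\boldsymbol{Y}^{-1}(\boldsymbol{X}-\boldsymbol{Y})\right)$ for all positive-definite $\boldsymbol{X},\boldsymbol{Y}$. Substituting $\boldsymbol{X} = \tilde{\boldsymbol{A}}^T\tilde{\boldsymbol{A}} + \Delta\boldsymbol{I}_{N_t}$ and $\boldsymbol{Y} = \tilde{\boldsymbol{B}}^T\tilde{\boldsymbol{B}} + \Delta\boldsymbol{I}_{N_t}$ and expanding $\text{Tr}(\boldsymbol{Y}^{-1}(\boldsymbol{X}-\boldsymbol{Y})) = \text{Tr}(\boldsymbol{Y}^{-1}\tilde{\boldsymbol{A}}^T\tilde{\boldsymbol{A}}) + \Delta\,\text{Tr}(\boldsymbol{Y}^{-1}) - N_t$ gives exactly (\ref{Eq:upper bound}) with $U \triangleq \Delta\,\text{Tr}\!\left((\tilde{\boldsymbol{B}}^T\tilde{\boldsymbol{B}}+\Delta\boldsymbol{I}_{N_t})^{-1}\right) - N_t$.

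Finally I would emphasize the point that actually matters for using this lemma in the LDAO iterations: $U$ depends only on the linearization point $\tilde{\boldsymbol{B}}$ and on $\Delta$, not on $\tilde{\boldsymbol{A}}$, so it is a genuine constant when the right-hand side is subsequently minimized over $\tilde{\boldsymbol{A}}$ (equivalently, over the $\boldsymbol{G}_n$'s in (\ref{logdet})). I do not anticipate a substantive obstacle, as this is essentially the log-det heuristic of \cite{fazel2003log}; the only items needing care are checking that $\Delta > 0$ keeps the linearization point in the interior of the domain (so the gradient inequality applies) and verifying that the $\Delta\boldsymbol{I}_{N_t}$ terms collapse into the $\tilde{\boldsymbol{A}}$-independent constant $U$ rather than contaminating the quadratic term $\text{Tr}((\tilde{\boldsymbol{B}}^T\tilde{\boldsymbol{B}}+\Delta\boldsymbol{I}_{N_t})^{-1}\tilde{\boldsymbol{A}}^T\tilde{\boldsymbol{A}})$.
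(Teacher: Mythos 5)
Your proof is correct. The paper itself gives no proof of this lemma—it is imported by citation from~\cite{Upperbound}—and your derivation is exactly the standard argument behind it: concavity of $\log\det$ on the positive-definite cone plus the supporting-hyperplane inequality $\log\det(\boldsymbol{X}) \leq \log\det(\boldsymbol{Y}) + \text{Tr}\left(\boldsymbol{Y}^{-1}(\boldsymbol{X}-\boldsymbol{Y})\right)$, with the $\Delta$-dependent terms correctly absorbed into $U = \Delta\,\text{Tr}\left((\tilde{\boldsymbol{B}}^T\tilde{\boldsymbol{B}}+\Delta\boldsymbol{I}_{N_t})^{-1}\right) - N_t$, which is independent of $\tilde{\boldsymbol{A}}$ as required. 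One minor correction to your closing remark (which does not affect the lemma): in this paper the resulting convex surrogate $\hat g$ is subsequently \emph{maximized} over the $\boldsymbol{G}_n$'s, with the optimum attained at a vertex of the feasible polytope, rather than minimized as in the original log-det heuristic for rank minimization.
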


Based on Lemma~\ref{Lemma:upper bound}, we can consider the upper bound on~(\ref{logdet}) as follows. 

\vspace{-0.2cm}
\begin{proposition}
	\label{Prop:upper}
	For any given local points $\tilde{\boldsymbol{G}}_n$, $n \in N$, let $\tilde{\boldsymbol{B}} = \boldsymbol{S}(\boldsymbol{H}_1 + \sum_{n = 1}^{N} \boldsymbol{F}_n \tilde{\boldsymbol{H}}_n\tilde{\boldsymbol{G}}_n)$.
	Then, the expression in~(\ref{logdet}) can be upper-bounded by
	\begin{align}
		\label{Eq:reexpressed}
		\hat g(\{\boldsymbol{G}_n\}) \triangleq \left\| \sum_{n = 1}^{N} \boldsymbol{S}\boldsymbol{F}_n \tilde{\boldsymbol{H}}_n \boldsymbol{G}_n \left(\tilde{\boldsymbol{B}}^T\tilde{\boldsymbol{B}}\right)^{-1/2} \right\|_F^2 + 2 \sum_{n = 1}^{N} \text{Tr}\left( \left(\tilde{\boldsymbol{B}}^T\tilde{\boldsymbol{B}}\right)^{-1} \boldsymbol{H}_1^T\boldsymbol{K}^{-1}\boldsymbol{F}_n \tilde{\boldsymbol{H}}_n \boldsymbol{G}_n  \right) + U',
	\end{align}
	where $U'$ denotes a constant.
	Moreover, $\hat g(\{\boldsymbol{G}_n\})$ and $\tilde g(\{\boldsymbol{G}_n\})$ have the same first-order properties at $\tilde{\boldsymbol{B}}$.
\end{proposition}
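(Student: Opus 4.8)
The plan is to obtain~(\ref{Eq:reexpressed}) by a single application of Lemma~\ref{Lemma:upper bound}, followed by an algebraic reorganization of the resulting trace term. First I would make the identification $\tilde{\boldsymbol{A}} = \boldsymbol{S}(\boldsymbol{H}_1 + \sum_{n=1}^{N}\boldsymbol{F}_n\tilde{\boldsymbol{H}}_n\boldsymbol{G}_n)$ and $\tilde{\boldsymbol{B}} = \boldsymbol{S}(\boldsymbol{H}_1 + \sum_{n=1}^{N}\boldsymbol{F}_n\tilde{\boldsymbol{H}}_n\tilde{\boldsymbol{G}}_n)$, so that $\tilde g(\{\boldsymbol{G}_n\}) = \log\det(\tilde{\boldsymbol{A}}^T\tilde{\boldsymbol{A}})$ by~(\ref{logdet}) and $\boldsymbol{S}^T\boldsymbol{S} = \boldsymbol{K}^{-1}$. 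Since $\boldsymbol{K}\succ\boldsymbol{0}$ and $\boldsymbol{H}$ has full column rank along the iterations (the standing condition under which $\log\det(\boldsymbol{H}^T\boldsymbol{K}^{-1}\boldsymbol{H})$ is finite), both $\tilde{\boldsymbol{A}}^T\tilde{\boldsymbol{A}}$ and $\tilde{\boldsymbol{B}}^T\tilde{\boldsymbol{B}}$ are positive definite, so the regularizer $\Delta\boldsymbol{I}_{N_t}$ in~(\ref{Eq:upper bound}) may be taken negligibly small (or kept as a small fixed perturbation, which affects none of the subsequent algebra). Lemma~\ref{Lemma:upper bound} then yields
\begin{equation}
	\tilde g(\{\boldsymbol{G}_n\}) \le \log\det(\tilde{\boldsymbol{B}}^T\tilde{\boldsymbol{B}}) + \text{Tr}\left((\tilde{\boldsymbol{B}}^T\tilde{\boldsymbol{B}})^{-1}\tilde{\boldsymbol{A}}^T\tilde{\boldsymbol{A}}\right) + U ,
\end{equation}
with $U$ a constant independent of $\{\boldsymbol{G}_n\}$, and this holds globally since it is the supporting hyperplane of the concave map $\boldsymbol{Y}\mapsto\log\det\boldsymbol{Y}$.

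Next I would expand the Gram matrix. Writing $\tilde{\boldsymbol{A}} = \boldsymbol{S}\boldsymbol{H}_1 + \sum_n\boldsymbol{S}\boldsymbol{F}_n\tilde{\boldsymbol{H}}_n\boldsymbol{G}_n$, the product $\tilde{\boldsymbol{A}}^T\tilde{\boldsymbol{A}}$ splits into the $\{\boldsymbol{G}_n\}$-independent term $\boldsymbol{H}_1^T\boldsymbol{K}^{-1}\boldsymbol{H}_1$, two cross terms $\sum_n\boldsymbol{H}_1^T\boldsymbol{K}^{-1}\boldsymbol{F}_n\tilde{\boldsymbol{H}}_n\boldsymbol{G}_n$ and its transpose, and the quadratic term $\boldsymbol{M}^T\boldsymbol{M}$ with $\boldsymbol{M}\triangleq\sum_n\boldsymbol{S}\boldsymbol{F}_n\tilde{\boldsymbol{H}}_n\boldsymbol{G}_n$. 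Because $(\tilde{\boldsymbol{B}}^T\tilde{\boldsymbol{B}})^{-1}$ is symmetric and $\text{Tr}(\cdot)$ is invariant under transposition, the two cross terms contribute identically, so $\text{Tr}((\tilde{\boldsymbol{B}}^T\tilde{\boldsymbol{B}})^{-1}(\tilde{\boldsymbol{A}}^T\tilde{\boldsymbol{A}}-\boldsymbol{H}_1^T\boldsymbol{K}^{-1}\boldsymbol{H}_1))$ reduces to $2\sum_n\text{Tr}((\tilde{\boldsymbol{B}}^T\tilde{\boldsymbol{B}})^{-1}\boldsymbol{H}_1^T\boldsymbol{K}^{-1}\boldsymbol{F}_n\tilde{\boldsymbol{H}}_n\boldsymbol{G}_n)$ plus $\text{Tr}((\tilde{\boldsymbol{B}}^T\tilde{\boldsymbol{B}})^{-1}\boldsymbol{M}^T\boldsymbol{M})$; inserting a symmetric square root, the latter equals $\|\boldsymbol{M}(\tilde{\boldsymbol{B}}^T\tilde{\boldsymbol{B}})^{-1/2}\|_F^2 = \|\sum_n\boldsymbol{S}\boldsymbol{F}_n\tilde{\boldsymbol{H}}_n\boldsymbol{G}_n(\tilde{\boldsymbol{B}}^T\tilde{\boldsymbol{B}})^{-1/2}\|_F^2$. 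Lumping $\log\det(\tilde{\boldsymbol{B}}^T\tilde{\boldsymbol{B}})$, $\text{Tr}((\tilde{\boldsymbol{B}}^T\tilde{\boldsymbol{B}})^{-1}\boldsymbol{H}_1^T\boldsymbol{K}^{-1}\boldsymbol{H}_1)$ and $U$ into a single constant $U'$ then reproduces exactly~(\ref{Eq:reexpressed}), so $\hat g(\{\boldsymbol{G}_n\})\ge\tilde g(\{\boldsymbol{G}_n\})$.

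For the matching first-order claim, I would observe that~(\ref{Eq:upper bound}) is precisely the first-order Taylor expansion of $\boldsymbol{Y}\mapsto\log\det\boldsymbol{Y}$ about $\boldsymbol{Y}_0=\tilde{\boldsymbol{B}}^T\tilde{\boldsymbol{B}}+\Delta\boldsymbol{I}_{N_t}$, so it holds with equality and with identical gradient at $\boldsymbol{Y}=\boldsymbol{Y}_0$, i.e.\ when $\{\boldsymbol{G}_n\}=\{\tilde{\boldsymbol{G}}_n\}$ (where $\tilde{\boldsymbol{A}}=\tilde{\boldsymbol{B}}$). Composing with the smooth quadratic map $\{\boldsymbol{G}_n\}\mapsto\tilde{\boldsymbol{A}}^T\tilde{\boldsymbol{A}}$ and applying the chain rule gives $\hat g(\{\tilde{\boldsymbol{G}}_n\})=\tilde g(\{\tilde{\boldsymbol{G}}_n\})$ and $\nabla_{\boldsymbol{G}_n}\hat g=\nabla_{\boldsymbol{G}_n}\tilde g$ at $\{\tilde{\boldsymbol{G}}_n\}$, which is the asserted ``same first-order properties at $\tilde{\boldsymbol{B}}$'' and is exactly what is needed so that a fixed point of the successive-approximation iteration over~(\ref{logdet}) is stationary.

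The routine ingredients (log-det differential, trace cyclicity) are safe; the step I expect to demand the most care is the bookkeeping in the Gram-matrix expansion --- verifying that the two cross terms genuinely coincide under the trace and that the quadratic term collapses to the single Frobenius norm $\|\sum_n\boldsymbol{S}\boldsymbol{F}_n\tilde{\boldsymbol{H}}_n\boldsymbol{G}_n(\tilde{\boldsymbol{B}}^T\tilde{\boldsymbol{B}})^{-1/2}\|_F^2$ rather than leaving a residual double sum inside the trace. The only genuinely delicate technical point is the passage $\Delta\to0$, which presupposes that $\boldsymbol{H}^T\boldsymbol{K}^{-1}\boldsymbol{H}$ stays invertible throughout the alternating optimization --- covered by the full-rank assumption on the effective channel, or, if one prefers, enforced by retaining a small fixed $\Delta$ in~(\ref{Eq:reexpressed}) at the expense of an extra vanishing term.
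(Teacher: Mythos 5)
Your proposal is correct and follows essentially the same route as the paper's proof: apply Lemma~\ref{Lemma:upper bound} with $\tilde{\boldsymbol{A}}=\boldsymbol{S}\boldsymbol{H}$, expand the Gram matrix into a quadratic term (collapsed to the Frobenius norm via the symmetric square root of $(\tilde{\boldsymbol{B}}^T\tilde{\boldsymbol{B}})^{-1}$), two cross terms that coincide under the trace, and constants absorbed into $U'$. You additionally spell out the first-order matching claim and the $\Delta\to 0$ passage, both of which the paper leaves implicit and both of which you handle correctly.
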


\begin{proof}
	By setting $\tilde{\boldsymbol{A}} = \boldsymbol{S}\boldsymbol{H}$ in~(\ref{Eq:upper bound}) and based on Lemma~\ref{Lemma:upper bound}, the objective function in~(\ref{logdet}) can be upper-bounded by
	\vspace{-0.3cm}
	\begin{align}
		\label{P1-b:objective_transform}
		\tilde g(\{\boldsymbol{G}_n\}) = & \lim_{\Delta \to 0} \log\det\left(\tilde{\boldsymbol{A}}^T\tilde{\boldsymbol{A}} + \Delta \boldsymbol{I}_{N_t}\right) \leq \text{Tr}\left( \left(\tilde{\boldsymbol{B}}^T\tilde{\boldsymbol{B}}\right)^{-1} \boldsymbol{H}^T \boldsymbol{K}^{-1} \boldsymbol{H}\right) + \underbrace{\log\det\left(\tilde{\boldsymbol{B}}^T\tilde{\boldsymbol{B}}\right) + U}_{U_0} \notag\\
		=& \text{Tr}\left( \left(\tilde{\boldsymbol{B}}^T\tilde{\boldsymbol{B}}\right)^{-1} \boldsymbol{H}_2^T\boldsymbol{S}^T\boldsymbol{S}\boldsymbol{H}_2 \right) + \text{Tr}\left( \left(\tilde{\boldsymbol{B}}^T\tilde{\boldsymbol{B}}\right)^{-1} \left(\boldsymbol{H}_1^T\boldsymbol{S}^T\boldsymbol{S}\boldsymbol{H}_2 + \boldsymbol{H}_2^T\boldsymbol{S}^T\boldsymbol{S}\boldsymbol{H}_1 \right) \right) \notag\\
		& + \underbrace{\text{Tr}\left( \left(\tilde{\boldsymbol{B}}^T\tilde{\boldsymbol{B}}\right)^{-1} \boldsymbol{H}_1^T \boldsymbol{K}^{-1} \boldsymbol{H}_1 \right) + U_0}_{U'}.
	\end{align}
	Furthermore, the quadratic (first) term in~(\ref{P1-b:objective_transform}) can be rewritten as
	\begin{align}
		\label{P1-b:objective_transform2}
		\text{Tr}\left( \left(\tilde{\boldsymbol{B}}^T\tilde{\boldsymbol{B}}\right)^{-1} \boldsymbol{H}_2^T\boldsymbol{S}^T\boldsymbol{S}\boldsymbol{H}_2 \right) 
		& = \text{Tr}\left(\left(\boldsymbol{S}\boldsymbol{H}_2\left(\tilde{\boldsymbol{B}}^T\tilde{\boldsymbol{B}}\right)^{-1/2} \right)^T \boldsymbol{S}\boldsymbol{H}_2\left(\tilde{\boldsymbol{B}}^T\tilde{\boldsymbol{B}}\right)^{-1/2} \right) \notag\\
		& = \left\| \sum_{n = 1}^{N} \boldsymbol{S}\boldsymbol{F}_n \tilde{\boldsymbol{H}}_n \boldsymbol{G}_n \left(\tilde{\boldsymbol{B}}^T\tilde{\boldsymbol{B}}\right)^{-1/2} \right\|_F^2.
	\end{align}
%
	While for the second term of~(\ref{P1-b:objective_transform}), it can be verified that
	\begin{align}
		\text{Tr}\left( \left(\tilde{\boldsymbol{B}}^T\tilde{\boldsymbol{B}}\right)^{-1} \boldsymbol{H}_1^T\boldsymbol{S}^T\boldsymbol{S}\boldsymbol{H}_2 \right) = \text{Tr}\left( \left(\tilde{\boldsymbol{B}}^T\tilde{\boldsymbol{B}}\right)^{-1} \boldsymbol{H}_2^T\boldsymbol{S}^T\boldsymbol{S}\boldsymbol{H}_1 \right).
	\end{align}
	Based on the above, Eq.~(\ref{logdet}) can be upper-bounded by~(\ref{Eq:reexpressed}).
\end{proof}

According to Proposition~\ref{Prop:upper}, in (P1-b), we can maximize $\hat g(\{\boldsymbol{G}_n\})$ by iteratively updating the local points $\tilde{\boldsymbol{G}}_n$, $n \in N$ or $\tilde{\boldsymbol{B}}$. 
In particular, for any given $\tilde{\boldsymbol{B}}$, we propose to maximize $\hat g(\{\boldsymbol{G}_n\})$ by alternately optimizing each $\boldsymbol{G}_n$ with $\boldsymbol{G}_i$, $i \neq n$ being fixed. 
Then, the subproblem w.r.t. $\boldsymbol{G}_n$ can be written as
\begin{align}
	\setlength\abovedisplayskip{3pt}
	\text{(P1-b-$n$):}\ \max\limits_{\begin{subarray}{c}
			\boldsymbol{G}_n
	\end{subarray}}\ & \hat g \left( \boldsymbol{G}_n ; \{\boldsymbol{G}_i\}_{i \ne n} \right) \label{P1-b':objective}\\
	\text{s.t.}\ 
	& \text{Tr}\left( \boldsymbol{G}_n \right) \leq 1, \label{P1-b':GnSum}\\
	& 0\leq g_{n,i}, \quad \forall\ \! i \in \mathcal{T}. \label{P1-b':g_dis}
	\setlength\belowdisplayskip{3pt}
\end{align}
For (P1-b-$n$), we can obtain its optimal solution in closed-form as follows.

\begin{proposition}
	\label{Prop:Hessian}
	The globally optimal solution to (P1-b-$n$) is given by
	\begin{equation}
		\setlength\abovedisplayskip{3pt}
		\label{Eq:best result}
		\boldsymbol{G}_{n}^* = \text{diag}\left( \boldsymbol{e}_{i^{\star}}^{N_t} \right),
		\setlength\belowdisplayskip{3pt}
	\end{equation}
	where
	\begin{equation}
		\setlength\abovedisplayskip{3pt}
		\label{Eq:best index}
		i^{\star} = \arg\max\limits_{i \in \{0\}\cup \mathcal{T}} \hat g \left( \text{diag}\left( \boldsymbol{e}_i^{N_t} \right) ; \{\boldsymbol{G}_i\}_{i \ne n} \right).
		\setlength\belowdisplayskip{3pt}
	\end{equation}
\end{proposition}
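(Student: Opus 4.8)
The plan is to exploit the structure of the surrogate objective $\hat g$ in~(\ref{Eq:reexpressed}): once the local point $\tilde{\boldsymbol{B}}$ and the other blocks $\boldsymbol{G}_i$, $i\neq n$, are held fixed, $\hat g$ is a \emph{convex} quadratic function of the free entries $g_{n,1},\dots,g_{n,N_t}$, whereas the feasible set of (P1-b-$n$) is a simplex. Since a convex function attains its maximum over a compact polytope at one of its extreme points, and the extreme points here are exactly the $N_t+1$ matrices $\text{diag}(\boldsymbol{e}_i^{N_t})$, $i\in\{0\}\cup\mathcal{T}$ (with $\boldsymbol{e}_0^{N_t}$ the all-zero vector, i.e., element $n$ left unaligned), solving (P1-b-$n$) reduces to the finite comparison in~(\ref{Eq:best index}). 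I would organize the argument into the three steps below.

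\emph{Step 1 (convexity of the surrogate in $\boldsymbol{G}_n$).} Since $\boldsymbol{G}_n=\sum_{i=1}^{N_t}g_{n,i}\,\text{diag}(\boldsymbol{e}_i^{N_t})$, the map $\boldsymbol{g}_n\mapsto\boldsymbol{S}\boldsymbol{F}_n\tilde{\boldsymbol{H}}_n\boldsymbol{G}_n(\tilde{\boldsymbol{B}}^T\tilde{\boldsymbol{B}})^{-1/2}=\sum_{i=1}^{N_t}g_{n,i}\boldsymbol{P}_i$ is linear, where $\boldsymbol{P}_i\triangleq\boldsymbol{S}\boldsymbol{F}_n\tilde{\boldsymbol{H}}_n\,\text{diag}(\boldsymbol{e}_i^{N_t})\,(\tilde{\boldsymbol{B}}^T\tilde{\boldsymbol{B}})^{-1/2}$ is a fixed matrix (well defined since $\tilde{\boldsymbol{B}}^T\tilde{\boldsymbol{B}}$ is positive definite). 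Substituting into~(\ref{Eq:reexpressed}) and absorbing all contributions of the fixed $\boldsymbol{G}_m$, $m\neq n$ (including the $m\neq n$ trace terms), into a constant matrix $\boldsymbol{C}$, a fixed vector $\boldsymbol{b}$, and a scalar $c$, one gets
\begin{equation}
\hat g\left(\boldsymbol{G}_n;\{\boldsymbol{G}_i\}_{i\neq n}\right)=\left\|\boldsymbol{C}+\sum_{i=1}^{N_t}g_{n,i}\boldsymbol{P}_i\right\|_F^2+\boldsymbol{b}^T\boldsymbol{g}_n+c,\qquad \boldsymbol{g}_n\triangleq[g_{n,1},\dots,g_{n,N_t}]^T.
\end{equation}
Differentiating twice in $\boldsymbol{g}_n$, the Hessian is $2\boldsymbol{M}$ with $[\boldsymbol{M}]_{i,i'}=\text{Tr}(\boldsymbol{P}_i^T\boldsymbol{P}_{i'})$, i.e., $\boldsymbol{M}$ is the Gram matrix of $\{\boldsymbol{P}_i\}_{i=1}^{N_t}$; hence $\boldsymbol{u}^T\boldsymbol{M}\boldsymbol{u}=\|\sum_i u_i\boldsymbol{P}_i\|_F^2\ge 0$ for all $\boldsymbol{u}$, so $\boldsymbol{M}\succeq\boldsymbol{0}$ and $\hat g(\,\cdot\,;\{\boldsymbol{G}_i\}_{i\neq n})$ is convex. (This is precisely what the quadratic upper bound of Lemma~\ref{Lemma:upper bound} and Proposition~\ref{Prop:upper} was constructed to deliver: it replaces the intractable, non-concave log-det by a tractable \emph{convex} quadratic surrogate.)

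\emph{Step 2 (reduction to the vertices) and Step 3 (enumeration).} Constraints~(\ref{P1-b':GnSum})--(\ref{P1-b':g_dis}) require $\boldsymbol{g}_n$ to lie in the simplex $\{\boldsymbol{g}\in\mathbb{R}^{N_t}:\boldsymbol{g}\succeq\boldsymbol{0},\ \boldsymbol{1}^T\boldsymbol{g}\le 1\}$, a nonempty compact convex polytope equal to the convex hull of the zero vector and the $N_t$ coordinate unit vectors, i.e., of the matrices $\text{diag}(\boldsymbol{e}_i^{N_t})$, $i\in\{0\}\cup\mathcal{T}$. For the convex $\hat g$ and any feasible $\boldsymbol{g}_n=\sum_k\lambda_k\boldsymbol{v}_k$ written as a convex combination of the vertices $\boldsymbol{v}_k$, Jensen's inequality gives $\hat g(\boldsymbol{g}_n)\le\sum_k\lambda_k\hat g(\boldsymbol{v}_k)\le\max_k\hat g(\boldsymbol{v}_k)$, while continuity of $\hat g$ on this compact set guarantees the maximum is attained, necessarily at a vertex. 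Since there are only $N_t+1$ vertices, evaluating $\hat g$ at each $\text{diag}(\boldsymbol{e}_i^{N_t})$ and letting $i^\star$ be the maximizer in~(\ref{Eq:best index}) yields the globally optimal $\boldsymbol{G}_n^*=\text{diag}(\boldsymbol{e}_{i^\star}^{N_t})$ in~(\ref{Eq:best result}).

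The main obstacle is Step~1: establishing that $\hat g$, restricted to $\boldsymbol{G}_n$, is \emph{convex} rather than concave. Concretely, this means recognizing the quadratic part of~(\ref{Eq:reexpressed}) as a squared Frobenius norm $\|\sum_i g_{n,i}\boldsymbol{P}_i\|_F^2$ — so that its Hessian is the positive-semidefinite Gram matrix $2\boldsymbol{M}$ — and verifying that the remaining trace term contributes only an affine part; one must also keep track that $\tilde{\boldsymbol{B}}^T\tilde{\boldsymbol{B}}$ is invertible so that $(\tilde{\boldsymbol{B}}^T\tilde{\boldsymbol{B}})^{-1/2}$ makes sense. Once convexity is in hand, Steps~2--3 are the standard ``maximize a convex function over a polytope'' argument and need no further computation.
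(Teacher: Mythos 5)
Your proposal is correct and follows essentially the same route as the paper: recognize that the objective of (P1-b-$n$) is a convex quadratic in $\boldsymbol{G}_n$, note that the feasible set is a polytope (the simplex with the $N_t+1$ vertices $\text{diag}(\boldsymbol{e}_i^{N_t})$, $i\in\{0\}\cup\mathcal{T}$), invoke the fact that a convex function attains its maximum over a polytope at a vertex, and enumerate. Your Step~1 merely makes explicit the Gram-matrix/positive-semidefinite-Hessian verification that the paper asserts without detail, which is a welcome but not substantively different addition.
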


\begin{algorithm}[t]
	\caption{Proposed LDAO Algorithm}
	\label{solve_P1_LDH}
	\textbf{Input:} $\boldsymbol{H}_1$, $\boldsymbol{H}_c$\\
	\textbf{Output:} $\boldsymbol{F}$, $\boldsymbol{G}$
	\begin{algorithmic}[1]
		\STATE Initialize the entries of $\boldsymbol{G}$ and $\boldsymbol{F}$ as \\
		$\boldsymbol{G}: g_{n,i} \gets 1, \text{if}\ i = \arg\min\limits_{i} d_{n,i}^1$; $g_{n,k} \gets 0, \forall k\neq i$;\\
		$\boldsymbol{F}: f_{n,j} \gets 1, \text{if}\ j = \arg\min\limits_{j} d_{j,n}^2$; $f_{n,k} \gets 0, \forall k\neq j$;
		
		\REPEAT
		\STATE Set $n \gets 1$;
		\STATE Calculate $\tilde{\boldsymbol{B}} = \boldsymbol{S}(\boldsymbol{H}_1 + \sum_{n = 1}^{N} \boldsymbol{F}_n \tilde{\boldsymbol{H}}_n\tilde{\boldsymbol{G}}_n)$;
		\REPEAT
		\STATE Calculate $\hat g(\{\boldsymbol{G}_n\})$ based on~(\ref{Eq:reexpressed});
		\STATE Calculate the optimal index $i^{\star}$ based on~(\ref{Eq:best index});
		\STATE Calculate the optimal $\boldsymbol{G}_{n}^*$ based on~(\ref{Eq:best result});
		\STATE Update the index: $n \gets n + 1$;
		\UNTIL{$n > N$}
		\STATE Optimize $\boldsymbol{F}$ using the similar steps as in steps $3$-$10$;
		\UNTIL{Convergence}
	\end{algorithmic}
\end{algorithm}
\setlength{\textfloatsep}{0.5cm}

\begin{proof}
	It is noted from~(\ref{Eq:reexpressed}) that the objective function of (P1-b-$n$) is a convex function, making it a concave quadratic programming (CQP) problem. 
	Although such a problem has been proven to be non-polynomial (NP)-hard to solve~\cite{tuy1998convex}, fortunately, we note that its feasible space is a polyhedron determined by~(\ref{P1-b':GnSum}) and~(\ref{P1-b':g_dis}). 
	Hence, the optimal solution to (P1-b-$n$) must be located at one of the corners of the polyhedron.
	To locate the corner, we can calculate the objective values of (P1-b-$n$) at the $N_t$ corners and select the optimal one, thus giving rise to~(\ref{Eq:best result}) and~(\ref{Eq:best index}).
\end{proof}

After $\boldsymbol{G}_{n}$ is optimized based on the above, the optimization of $\boldsymbol{G}_{n+1}$ follows until the convergence is reached.
Next, for the optimization of $\boldsymbol{F}$ with a fixed $\boldsymbol{G}$, we note that the objective function in~(\ref{Eq:reexpressed}) has the same structure w.r.t. $\boldsymbol{F}_n$ and $\boldsymbol{G}_n$. 
As such, the similar procedures to the above can be used to optimize $\boldsymbol{F}_n$'s, for which the details are omitted for brevity.
The main procedures of the LDAO algorithm are summarized in \textbf{Algorithm}~\ref{solve_P1_LDH}, where we apply the same initialization scheme as in \textbf{Algorithm}~\ref{solve_P1}. 
Note that the convergence of \textbf{Algorithm}~\ref{solve_P1_LDH} can be guaranteed since it must generate non-decreasing objective values of (P1).


\vspace{-0.4cm}
\subsection{Proposed Solution to (P2)}
In this subsection, we aim to solve (P2). 
It can be easily shown that (P2) is a convex optimization problem. 
In particular, we show that its globally optimal solution can be obtained by solving the Karush-Kuhn-Tucker (KKT) conditions, which depends on the relationship between $\boldsymbol{A}_{\max}$ and $A_{\text{total}}$.

Firstly, if $\sum_{i = 1}^{N_t}A_{i,\max} \leq A_{\text{total}}$, the constraint~(\ref{P:total_power}) in (P2) will be inactive. 
Since the objective function of (P2), $f_2(A)$, monotonically increases with each $A_i$. 
Its optimal solution can be easily obtained as
\begin{equation}
	\setlength\abovedisplayskip{3pt}
	\label{Eq:MIMO_Aopti_case1}
	A_i^* = A_{i,\max}.
	\setlength\belowdisplayskip{3pt}
\end{equation}

Secondly, if $A_{i,\max}\geq A_\text{total}/N_t, \forall i \in \mathcal{T}$, both~(\ref{Eq:MIMO_power_constraint}) and~(\ref{Eq:MIMO_totalpower_constraint}) will be active. 
In this case, the maximum of $f_2(A)$ can be attained by invoking the Jensen's inequality~\cite{Boyd}, i.e.,
\begin{equation}
	\setlength\abovedisplayskip{3pt}
	\label{Eq:MIMO_Aopti_case2_Jensen}
	\sum_{i = 1}^{N_t}\log A_i \leq N_t\log\frac{\sum_{i = 1}^{N_t} A_i}{N_t} \leq N_t\log \frac{A_\text{total}}{N_t},
	\setlength\belowdisplayskip{3pt}
\end{equation}
where the equality is achieved at
\begin{equation}
	\setlength\abovedisplayskip{3pt}
	\label{Eq:MIMO_Aopti_case2}
	A_i^* = \frac{A_\text{total}}{N_t},\quad \forall i \in \mathcal{T}.
	\setlength\belowdisplayskip{3pt}
\end{equation}

In other cases, the optimal solution to (P2) can be derived by solving the KKT conditions.
Specifically, the Lagrangian function of (P2) is given by
\vspace{-0.3cm}
\begin{align}
	\label{Eq:MIMO_Aopti_case3_Lagrangian}
	\mathcal{L}\left( \boldsymbol{A}, \boldsymbol{\zeta}, \boldsymbol{\nu}, \varpi \right) = - \sum_{i = 1}^{N_t}\log A_i + \sum_{i = 1}^{N_t}\nu_i \left( A_i -  A_{i,\max} \right) - \sum_{i = 1}^{N_t}\zeta_i A_i + \varpi\left( \sum_{i = 1}^{N_t} A_i - A_\text{total} \right),
\end{align}
where $\boldsymbol{\zeta} \triangleq [\zeta_1, \cdots, \zeta_{N_t}]\in \mathbb{R}_{N_t\times 1}^+$, $\boldsymbol{\nu} \triangleq [\nu_1, \cdots, \nu_{N_t}]\in \mathbb{R}_{N_t\times 1}^+$, and $\varpi \in \mathbb{R}$ denote Lagrange multipliers.
Accordingly, the KKT conditions can be expressed as
\vspace{-0.2cm}
\begin{align}
	\setlength\abovedisplayskip{3pt}
	&\frac{\partial \mathcal{L}}{\partial A_i} = -\frac{1}{A_i^*} - \zeta_i^* + \nu_i^* + \varpi^* = 0, \label{Eq:MIMO_Aopti_case3_KKT1} \\
	&\zeta_i^*A_i^* = \nu_i^*(A_i^* - A_{i,\max}) = 0, \label{Eq:MIMO_Aopti_case3_KKT2}\\
	&A_i^*> 0\geq A_i^* - A_{i,\max}, \label{Eq:MIMO_Aopti_case3_KKT4}\\
	&\zeta_i^*\geq0, \nu_i^*\geq0. \label{Eq:MIMO_Aopti_case3_KKT5}
	\setlength\belowdisplayskip{3pt}
\end{align}
Based on~(\ref{Eq:MIMO_Aopti_case3_KKT2}) and~(\ref{Eq:MIMO_Aopti_case3_KKT5}), it is easy to see $\zeta_i^* = 0$. 
Then, by substituting~(\ref{Eq:MIMO_Aopti_case3_KKT1}) into~(\ref{Eq:MIMO_Aopti_case3_KKT2}), we have
\begin{equation}
	\label{Eq:MIMO_Aopti_case3_equation}
	\left(\frac{1}{A_i^*} - \varpi^* \right)\left( A_i^* - A_{i,\max} \right) = 0, \quad \forall i \in \mathcal{T}.
\end{equation}
It follows from~(\ref{Eq:MIMO_Aopti_case3_equation}) that the optimal solution to (P2) is equal to the minimum of $A_{i,\max}$ and $1/\varpi^*$, i.e.,
\begin{equation}
	\setlength\abovedisplayskip{3pt}
	\label{Eq:MIMO_Aopti_case3_res}
	A_i^* = \min\left( A_{i,\max},\ \frac{1}{\varpi^*} \right), \quad \forall i \in \mathcal{T},
	\setlength\belowdisplayskip{3pt}
\end{equation}
where $\varpi^*$ is a constant satisfying $\sum_{i = 1}^{N_t}A_i^* = A_\text{total}$.

Based on the above, the optimal solution to (P2) can be expressed in closed-form as
\begin{equation}
	\setlength\abovedisplayskip{3pt}
	\label{MIMO_Aopti_case3_A}
	A_i^* = 
	\left\{
	\begin{aligned}
		&A_{i,\max},                 & &\text{if}\ \sum_{i = 1}^{N_t}A_i \leq A_{\text{total}}, \\
		&\frac{A_\text{total}}{N_t}, & &\text{if}\  \min\limits_{i} A_{i,\max}\geq\frac{A_\text{total}}{N_t}, \\
		&\min\left( A_{i,\max},\frac{1}{\varpi^*} \right),& &\text{otherwise}.
	\end{aligned}
	\right.
	\setlength\belowdisplayskip{3pt}
\end{equation}

\vspace{-0.2cm}
\subsection{Complexity Analysis of Proposed Algorithms}
\label{subsec:complex}
In this subsection, we compare the complexity of the proposed LIP and LDAO algorithms, where only the complexity of multiplications is considered while that of additions is ignored.
For the LIP algorithm, it involves the calculation of the inverse $(\boldsymbol{H}^T\boldsymbol{K}^{-1}\boldsymbol{H})^{-1}$, which requires $N_r^2N_t + N_t^2N_r$ and $N_t^3$ operations for matrix multiplications and matrix inverse, respectively. 
Thus, the complexity of calculating $\nabla_{\boldsymbol{V}}\log\det(\boldsymbol{H}^T\boldsymbol{K}^{-1}\boldsymbol{H})$ can be expressed as $\mathcal{O}(2N_r^2N_t + 2N_t^2N_r + N_t^3 + NN_tN_r)$ according to~(\ref{Eq:MIMO_VDeriva}).
Let $I_1$ denote the total number of iterations needed in the LIP algorithm. 
Then, the complexity of \textbf{Algorithm}~\ref{solve_P1} is given by $\mathcal{O}(I_1(2N_r^2N_t + 2N_t^2N_r + N_t^3 + NN_tN_r ) + NN_tN_r ) \approx \mathcal{O}(I_1(2N_r^2N_t + 2N_t^2N_r + N_t^3 + NN_tN_r ))$, where $\mathcal{O}(NN_tN_r)$ denotes the complexity of the minimum distance projection in steps $9$-$14$.
While in the LDAO algorithm, it takes $\mathcal{O}(NN_tN_r)$ operations for solving (P1-b-$n$) to obtain $\boldsymbol{G}_{n}^*$ based on~(\ref{Eq:best result}). 
Let $I_2$ denote the total number of iterations needed in the LDAO algorithm. 
Then, the complexity of \textbf{Algorithm}~\ref{solve_P1_LDH} can be expressed as $\mathcal{O}(I_2NN_tN_r)$.
Moreover, (P2) is solved in closed-form following~(\ref{MIMO_Aopti_case3_A}), where $\varpi^*$ can be obtained by the bisection method. 
Let $J$ represent the total number of bisection search. 
Then, the complexity of solving (P2) is given by $\mathcal{O}(JN_t)$.

Based on the above, it is noted that the LDAO algorithm incurs lower complexity than the LIP algorithm since it decouples (P1) into a series of CQP subproblems, which have much fewer variables to be optimized and admit closed-form solutions as shown in~(\ref{Eq:best result}).
Nevertheless, the LIP algorithm is generally superior to the LDAO algorithm in terms of performance as it directly solves the original problem (P0) by introducing an equivalent problem reformulation.
In summary, these two algorithms strike a balance between the performance and complexity, and further validations will be provided in the next section via simulation.

\vspace{-0.4cm}
\section{Numerical Results}
\vspace{-0.1cm}
\label{Sec:Num}
This section provides simulation results to examine the the proposed algorithms for maximizing the capacity of OIRS-assisted MIMO VLC.
We establish a three-dimensional (3D) Cartesian coordinate system, where OIRS with its four corners located at $(0.0\ \text{m},\ 1.0\  \text{m},\ 1.2\  \text{m})$, $(0.0\ \text{m},\ 1.0\  \text{m},\ 2.9\  \text{m})$, $(0.0\ \text{m},\ 7.0\  \text{m},\ 1.2\  \text{m})$, and $(0.0\  \text{m},\ 7.0 \text{m},\ 2.9 \text{m})$, respectively.
The direct LoS channel $\boldsymbol{H}_1$ is normalized as $\|\boldsymbol{H}_1\|_F^2 = 1$, and the OIRS-reflected channel $\boldsymbol{H}_2$ and the AWGN vector $\boldsymbol{z}$ are scaled by the same factor accordingly. 
Other simulation parameters are summarized in Table~\ref{Tab:Parameters}. 

\begin{table*}[t]
	\centering
	\small
	\caption{Simulation parameters}
	\vspace{-0.4cm}
	\begin{tabular}{| l | c || l | c |}
		\hline
		\multicolumn{1}{|c|}{\textbf{Parameters}} &
		\multicolumn{1}{c||}{\textbf{Values}} & \multicolumn{1}{c|}{\textbf{Parameters}} & \multicolumn{1}{c|}{\textbf{Values}} \\
		\hline
		The number of LEDs, $N_t$ & 4 & The Lambertian index, $m$ & $1$ \\
		\hline
		The number of PDs, $N_r$ & 4 & The reflectivity of OIRS, $\gamma$ & $0.9$ \\
		\hline
		The number of OIRS reflecting elements, $N$ & 32 & The optical filter gain, $g_{of}$ & $1$ \\
		\hline
		The size of room & $8\ \text{m} \times 8\ \text{m} \times 3.5\ \text{m}$ & The area of PD, $A_{PD}$ & $1\ \text{cm}^2$ \\
		\hline
		The location of the receiver center & $(2,\ 3.2,\ 1)\ \text{m}$  & The spacing of PD & $0.2$ m \\
		\hline
		The maximum total emission power, $A_{\text{total}}$ & $4$ W & The semi-angle of FoV, $\Phi_0$ & $70\degree$ \\
		\hline
		The maximum emission power, $\boldsymbol{A}_{\max}$ & $(1.6, 1.4, 0.7, 1)$ W & The refractive index of PD, $q$ & $1.5$ \\
		\hline
	\end{tabular}
	\label{Tab:Parameters}
\end{table*}
\setlength{\textfloatsep}{0.4cm}
\vspace{-0.1cm}
In the simulation, we show the performance of the following two proposed schemes:
\begin{enumerate}
	\item
	\textbf{Proposed scheme 1}:
	This scheme optimizes the OIRS element alignment based on \textbf{Algorithm}~\ref{solve_P1}, and the transmitter emission power $\boldsymbol{A}$ is obtained as~(\ref{MIMO_Aopti_case3_A}).
	
	\item
	\textbf{Proposed scheme 2}:
	The OIRS element alignment is optimized based on \textbf{Algorithm}~\ref{solve_P1_LDH}, and the transmitter emission power is obtained as~(\ref{MIMO_Aopti_case3_A}).
\end{enumerate}
Also, we consider the following baseline schemes for performance comparison.
\begin{enumerate}
	
	\item
	\textbf{Uniform power scheme}:
	The OIRS element alignment matrices are optimized by the proposed LDAO algorithm, while the emission power on the $i$th LED is given by
	\begin{equation}
		\setlength\abovedisplayskip{3pt}
		\label{Eq:uniform power}
		A_i = \min \left(\min_j A_{j,\max}, \frac{A_{\text{total}}}{N_t}\right).  
		\setlength\belowdisplayskip{3pt}
	\end{equation}
	
	\item
	\textbf{Distance-based greedy scheme}:
	OIRS element alignments are determined in a greedy manner, i.e., each OIRS reflecting element is aligned with the nearest transmitter and receiver  antennas.
	Meanwhile, the power allocation is set based on~(\ref{Eq:uniform power}).
	
	\item
	\textbf{No-OIRS scheme}:
	The number of OIRS reflecting elements is set to $N = 0$ and the emission power of the $i$th LED is set based on~(\ref{Eq:uniform power}).
\end{enumerate}
The transmit SNR in VLC is given by the ratio of emission power to noise power~\cite{lapidoth2009capacity}, i.e., $\text{SNR} = \sum_{i = 1}^{N_t}A_i/\sigma N_t$ in Cases \uppercase\expandafter{\romannumeral1} and \uppercase\expandafter{\romannumeral2}, and $\text{SNR} = \sum_{i = 1}^{N_t}E_i/\sigma N_t$ in Case \uppercase\expandafter{\romannumeral3}.

\begin{figure}[t]
	\centering
	\begin{minipage}[b]{0.47\linewidth}
		\centering
		\setlength{\abovecaptionskip}{0.cm}
		\includegraphics[width=1\textwidth]{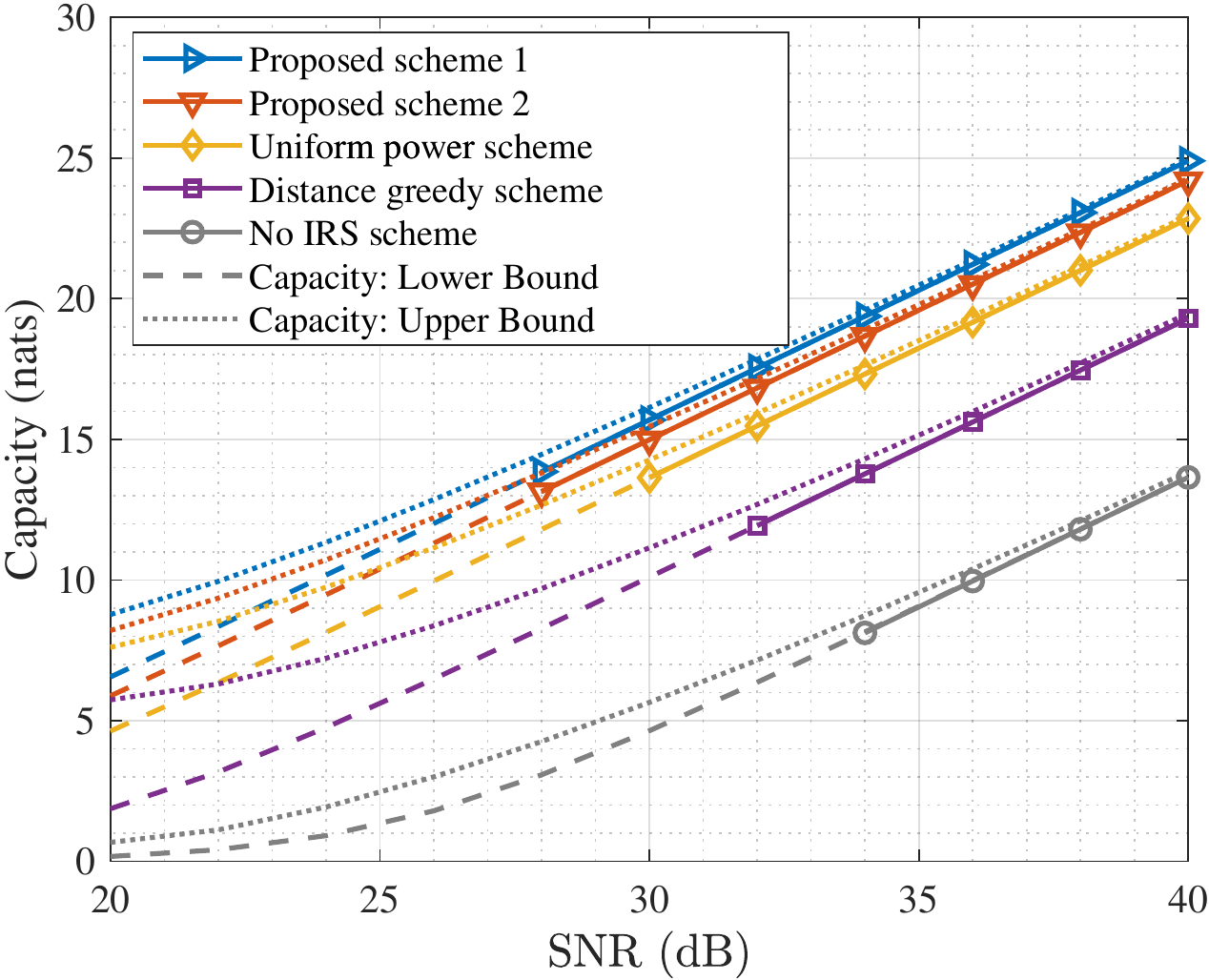}
		\caption{Case \uppercase\expandafter{\romannumeral1}: Capacity of the OIRS-assisted MIMO VLC when $\alpha = 0.4$.}
		\label{Fig:case1}
	\end{minipage}
	\hspace{0.6cm}
	\begin{minipage}[b]{0.47\linewidth}
		\centering
		\setlength{\abovecaptionskip}{0.cm}
		\includegraphics[width=1\textwidth]{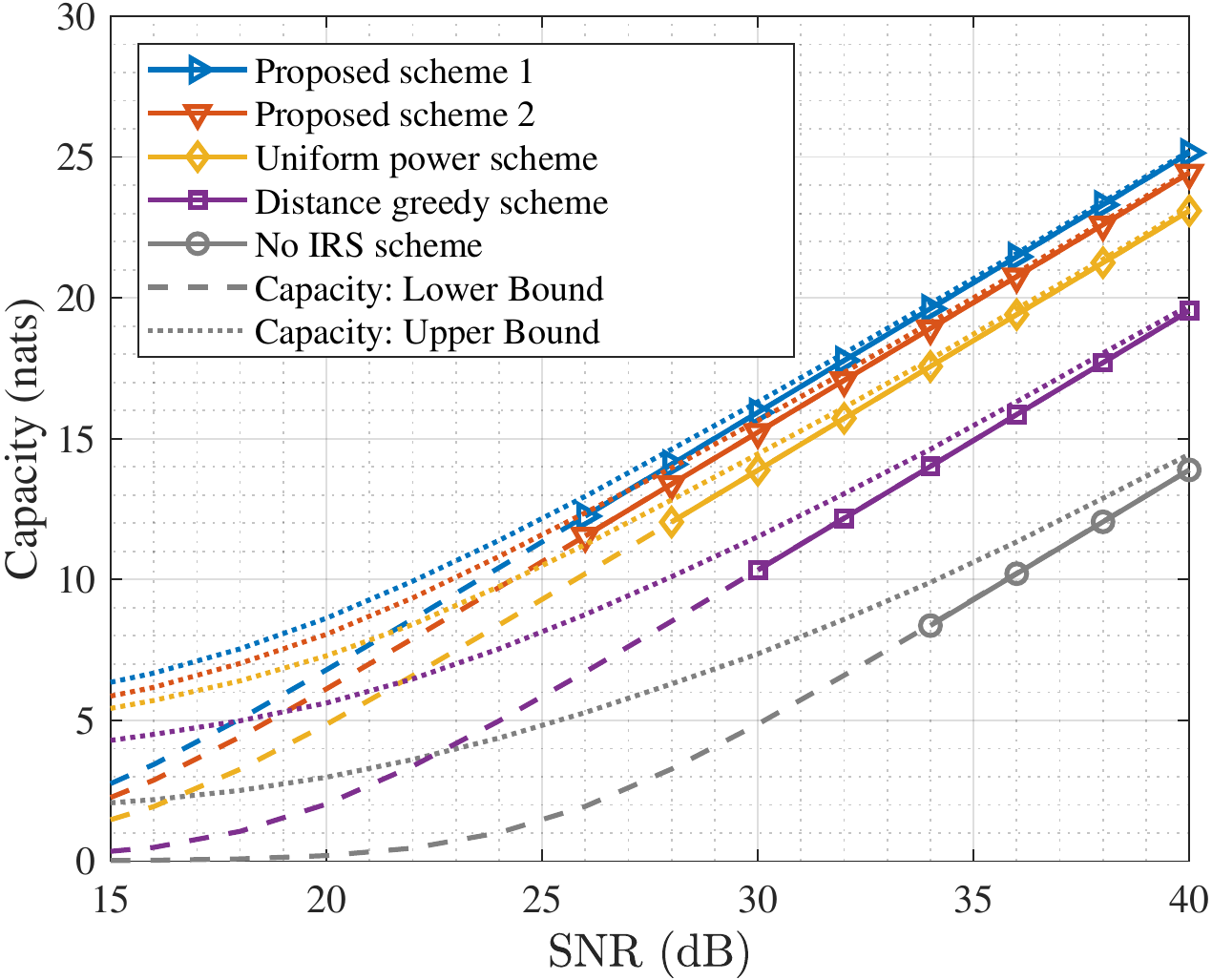}
		\caption{Case \uppercase\expandafter{\romannumeral2}: Capacity of the OIRS-assisted MIMO VLC when $\alpha = 0.5$.}
		\label{Fig:case2}
	\end{minipage}
	\\
	\vspace{0.4cm}
	\begin{minipage}[b]{0.47\linewidth}
		\centering
		\setlength{\abovecaptionskip}{0.cm}
		\includegraphics[width=1\textwidth]{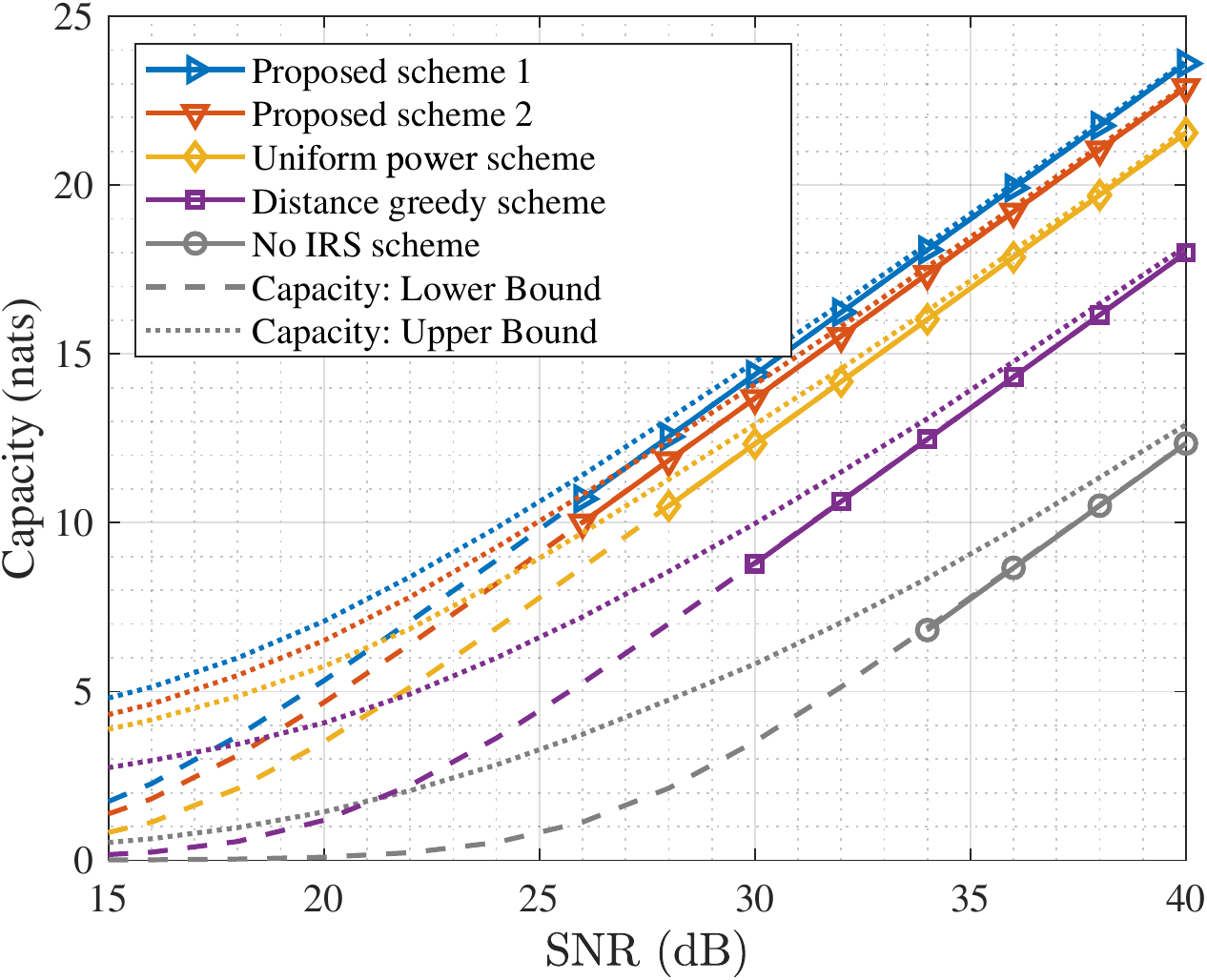}
		\caption{Case \uppercase\expandafter{\romannumeral3}: Capacity of the OIRS-assisted MIMO VLC when $\alpha \to 0$.}
		\label{Fig:case3}
	\end{minipage}
\end{figure}

In Fig.~\ref{Fig:case1}, the capacity performance of the OIRS-assisted MIMO VLC by different schemes is shown under Case \uppercase\expandafter{\romannumeral1}, where the solid line and dashed line denote their corresponding capacity upper bound and lower bound, respectively.
Their exact expressions are given in~(\ref{Eq:MIMOcase1_upperbound}) and~(\ref{Eq:MIMOcase1_EPI}), respectively. 
It is observed that these two bounds converge to the asymptotic capacity expression when SNR grows large, which validates the derived asymptotic capacity in the high-SNR regime.
It is also observed that the performance by all schemes monotonically increases with the SNR. 
In particular, the proposed schemes 1 and 2 are observed to yield much better performance than other baseline schemes.
This is attributed to the optimized transmitter emission power and OIRS element alignment.
Moreover, the performance of scheme 1 is observed to be better than that of scheme 2, which is consistent with our discussion in Section~\ref{subsec:complex}.
Furthermore, it is observed that even the distance-based greedy scheme can outperform the no-OIRS scheme, implying the effectiveness of OIRS in terms of capacity enhancement.

Similarly, the OIRS-assisted MIMO VLC capacities under Cases \uppercase\expandafter{\romannumeral2} and \uppercase\expandafter{\romannumeral3} are depicted in Fig.~\ref{Fig:case2} and Fig.~\ref{Fig:case3}, respectively.
Since the capacity results in these two cases are only different from that in Case \uppercase\expandafter{\romannumeral1} in an offset, as shown in Table~\ref{Tab:capacity}, all observations made for Case \uppercase\expandafter{\romannumeral1} in Fig.~\ref{Fig:case1} are applicable in Fig.~\ref{Fig:case2} and Fig.~\ref{Fig:case3}.
As such, in the sequel of this section, we only present the capacity results in Case \uppercase\expandafter{\romannumeral2} for brevity.

\begin{figure}[t]
	\centering
	\begin{minipage}[b]{0.47\linewidth}
		\centering
		\includegraphics[width=1\textwidth]{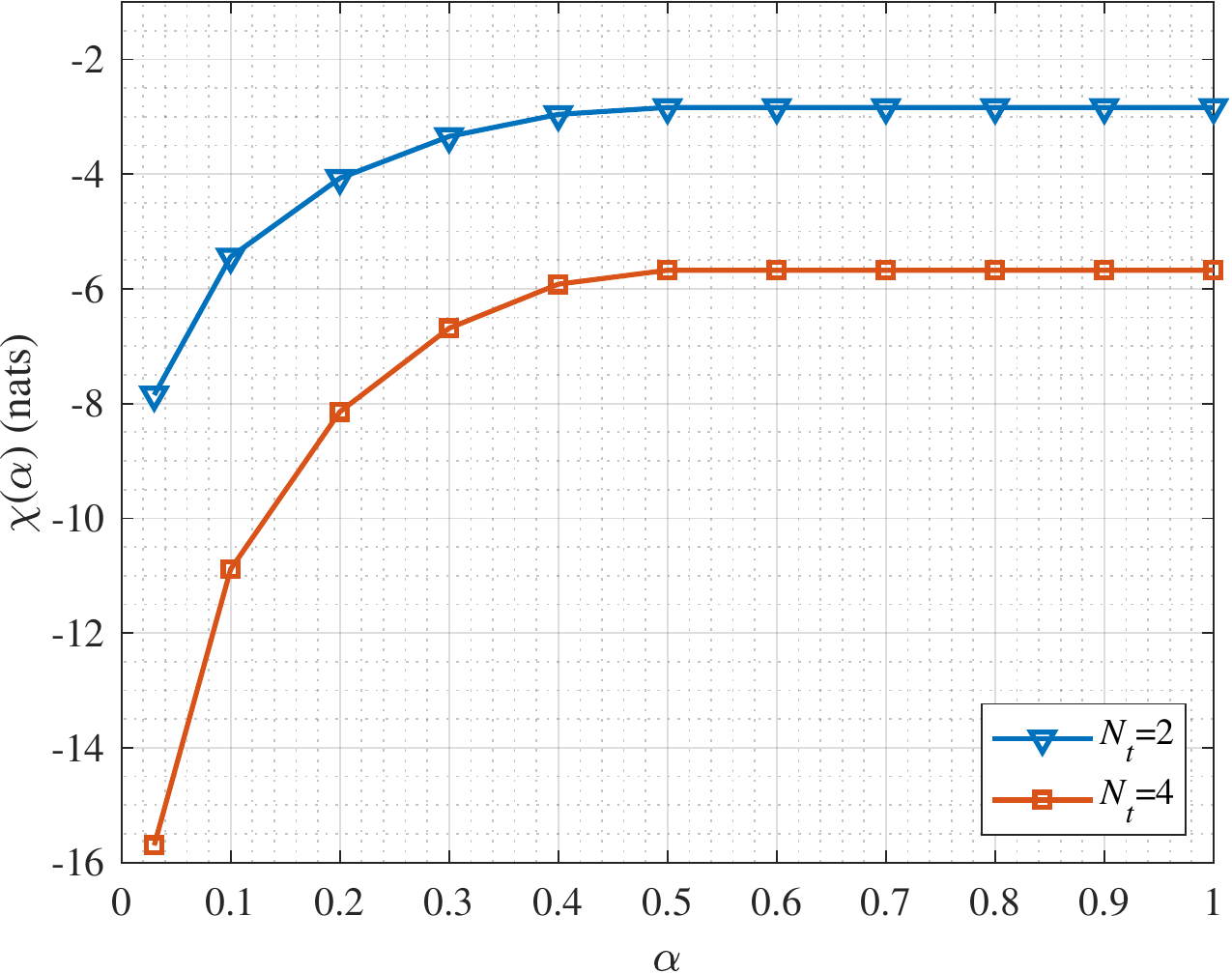}
		\caption{The offset $\chi(\alpha)$ of the OIRS-assisted MIMO VLC capacity for $\alpha \in (0,1]$.}
		\label{Fig:chi}
	\end{minipage}
	\hspace{0.3cm}
	\begin{minipage}[b]{0.47\linewidth}
		\centering
		\includegraphics[width=1\textwidth]{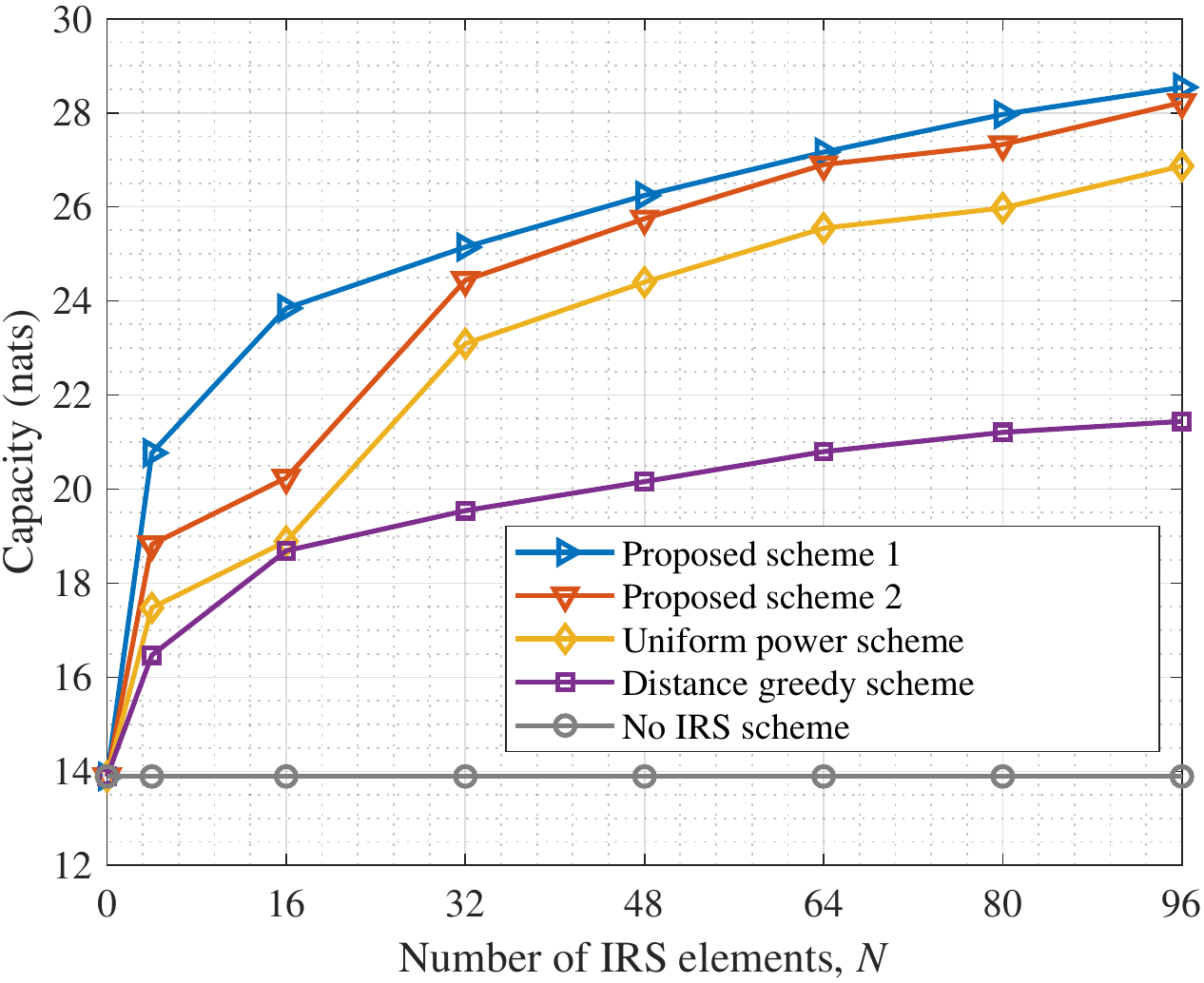}
		\caption{Capacity of the OIRS-assisted MIMO VLC versus the number of OIRS reflecting elements.}
		\label{Fig:number}
	\end{minipage}
\end{figure}


In Fig.~\ref{Fig:chi}, we plot the offset function $\chi(\alpha)$ versus $\alpha$ (see Table~\ref{Tab:capacity}).
It is observed that $\chi(\alpha)$ increases with $\alpha$ when $0<\alpha<1/2$. 
This is expected as for given $A_i$, $i\in \mathcal{T}$, the maximum average power $E_i$ can become larger as $\alpha$ grows, thus resulting in a larger feasible set of input $\boldsymbol{x}$ and hence a larger capacity offset.
Nevertheless, it is observed that $\chi(\alpha)$ remains unchanged when $1/2\leq\alpha\leq1$, which is due to the fact that the average power constraints on $E_i$'s become inactive in this case, as shown via Lemma~\ref{Lemma:thres1/2}.
Moreover, it is also observed that $\chi(\alpha)$ increases with the number of LEDs, $N_t$, as shown in~(\ref{Eq:chi_1}) and~(\ref{Eq:chi_2}).

In Fig.~\ref{Fig:number}, we show the capacity of the OIRS-assisted MIMO VLC versus the number of OIRS reflecting elements with SNR = $40$ dB.
It is observed that the MIMO VLC capacity monotonically increases with $N$, as more transmitter and receiver antenna pairs can be aligned with an OIRS reflecting element, thereby enhancing the received signal power at the receiver. 
As such, it is desired to deploy a sufficiently large OIRS in VLC to boost the system capacity, under the extremely near-field condition.

\vspace{-0.3cm}
\section{Conclusions}
\vspace{-0.2cm}
\label{Sec:Conclude}
In this paper, we characterized the capacity of OIRS-assisted MIMO VLC systems.
Under the extremely near-field condition in VLC, we first developed a new channel model for OIRS and revealed an interesting ``no crosstalk'' property.
Next, we derived the asymptotic capacity expressions for OIRS-assisted MIMO VLC under different emission power constraints in the high-SNR regime and proposed two algorithms to maximize them by jointly optimizing the OIRS element alignment and the transmitter emission power.
Numerical results validated the derived asymptotic capacity and theoretical analysis. 
It was also shown that incorporating an OIRS can significantly enhance the MIMO VLC capacity, and the proposed algorithms can significantly outperform other baseline schemes, thus enabling an appealing solution for capacity enhancement in the next-generation VLC-based wireless systems.

\vspace{-0.3cm}
\begin{appendices}
	\section{Proof of Proposition~\ref{pro1}}
	\label{app:A}
	As the capacity expression of MIMO VLC under a full-rank square channel matrix can be extended directly to that under a full-column-rank channel matrix~\cite{8006585}, without loss of generality, we assume a square and invertible channel matrix $\boldsymbol{H}$, based on which the mutual information can be rewritten as
	\begin{equation}
			\setlength\abovedisplayskip{3pt}
		\label{Eq:MIMO_mutualInfor}
		\textit{I}\left( \boldsymbol{x}; \boldsymbol{Hx} + \boldsymbol{z} \right) = \textit{I}\left( \boldsymbol{x}; \boldsymbol{x}+\boldsymbol{z}'  \right),
			\setlength\belowdisplayskip{3pt}
	\end{equation}
	where $\boldsymbol{z}' \triangleq \boldsymbol{H}^{-1}\boldsymbol{z} \sim \mathcal{N}(\boldsymbol{0}, \boldsymbol{H}^{-1}\boldsymbol{K}\boldsymbol{H}^{-T})$ denotes the equivalent additive white Gaussian noise (AWGN) with the covariance matrix of $\boldsymbol{H}^{-1}\boldsymbol{K}\boldsymbol{H}^{-T}$.
	
	In Case \uppercase\expandafter{\romannumeral1}, it can be proved that the channel capacity is upper-bounded by $\text{sup}_{Q} \mathbb{E} [ \mathscr{D}(W(\boldsymbol{y}|\boldsymbol{x})||R(\boldsymbol{y})) ]$, where $\mathscr{D}(\boldsymbol{x}||\boldsymbol{y})$ and $R(\boldsymbol{y})$ denote the Kullback-Leibler (KL) divergence and the probability measure of the output $\boldsymbol{y}$, respectively.
	Suppose that the distribution of $y_i$ is given by~\cite{lapidoth2009capacity}
	\begin{align}
			\setlength\abovedisplayskip{3pt}
		\label{Eq:Ri}
		R_i(y_i) = \left\{
		\begin{aligned}
			&\frac{1}{\sigma_i\sqrt{2\pi}}e^{-\frac{y_i^2}{2\sigma_i^2}}, && y_i <-\delta_i, \\
			&\frac{\mu^*}{A_i}\frac{\left(1 - 2\mathcal{Q}\left( \frac{
					2\delta_i + A_i}{2\sigma_i}\right)\right)}{\left(e^{\mu^*\frac{\delta_i}{A_i}} - e^{-\mu^*(1+\frac{\delta_i}{A_i})}\right)}e^{-\mu^*\frac{y_i}{A_i}}, && -\delta_i\leq y_i \leq A_i+\delta_i, \\
			&\frac{1}{\sigma_i\sqrt{2\pi}}e^{-\frac{\left(y_i-A_i\right)^2}{2\sigma_i^2}},&& y_i >A_i+\delta_i,
		\end{aligned}
		\right.
			\setlength\belowdisplayskip{3pt}
	\end{align}
	where $\sigma_i^2$ denotes the $i$th diagonal element of covariance matrix of $\boldsymbol{z}'$, $\delta_i \triangleq \sigma_i\log(1 + A_i)$, and $\mu^*$ is given by~(\ref{Eq:MIMOcase1_mu}).
	Therefore, the OIRS-assisted MIMO VLC capacity is upper-bounded by
	\begin{align}
			\setlength\abovedisplayskip{3pt}
		\label{Eq:MIMOcase1_upperbound}
		& C\left( \boldsymbol{G}, \boldsymbol{F}, \boldsymbol{A} \right) \notag\\
		\leq & \sum_{i = 1}^{N_t} \mathop{\text{sup}}\limits_{Q} \mathbb{E} \left[ \mathscr{D}\left(W^{(i)}\left(y_i|x_i\right)||R_i\left(y_i\right)\right) \right] \notag\\
		= & \sum_{i = 1}^{N_t} \Bigg\{ \mathcal{Q}\left(\frac{\delta_i}{\sigma_i}\right) - \frac{1}{2} + \frac{\mu^*\sigma_i}{\sqrt{2\pi}A_i}\left(e^{-\frac{\delta_i^2}{2\sigma_i^2}} - e^{-\frac{(\delta_i + A_i)^2}{2\sigma_i^2}}\right) + \mu^*\alpha\left(1 - 2\mathcal{Q}\left( \frac{
			2\delta_i + A_i}{2\sigma_i}\right)\right)  \notag\\
		& + \frac{\delta_i}{\sigma_i\sqrt{2\pi}}e^{-\frac{\delta_i^2}{2\sigma_i^2}} + \frac{1}{2}\log\sigma_i^2 + \left(1 - 2\mathcal{Q}\left( \frac{
			2\delta_i + A_i}{2\sigma_i}\right)\right)\log\frac{A_i\left(e^{\mu^*\frac{\delta_i}{A_i}} - e^{-\mu^*(1+\frac{\delta_i}{A_i})}\right)}{\sigma_i\sqrt{2\pi}\mu^*\left(1-2\mathcal{Q}\left(\frac{\delta_i}{\sigma_i}\right)\right)} \Bigg\} \notag\\
		& +\frac{1}{2}\log\det\left(\boldsymbol{H}^T\boldsymbol{K}^{-1}\boldsymbol{H}\right),
			\setlength\belowdisplayskip{3pt}
	\end{align}
	where $\mathcal{Q}(\cdot)$ denotes the complementary cumulative distribution function of the standard Gaussian.
	The details on how to derive~(\ref{Eq:MIMOcase1_upperbound}) can be found in~\cite{8006585} and thus are omitted for brevity.
	Moreover, as $\mathcal{Q}(\cdot)$ functions are negligble in the high-SNR regime, based on which we have
	\begin{align}
			\setlength\abovedisplayskip{3pt}
		\label{Eq:MIMOcase1_upperbound_asymptotic}
		\lim\limits_{\forall i\in \mathcal{T},A_i\to \infty} C\left( \boldsymbol{G}, \boldsymbol{F}, \boldsymbol{A} \right)
		\leq& \sum_{i = 1}^{N_t}\log A_i + \frac{1}{2}\log\det\left(\boldsymbol{H}^T\boldsymbol{K}^{-1}\boldsymbol{H}\right) + \chi\left( \alpha \right),
			\setlength\belowdisplayskip{3pt}
	\end{align}
	by taking the limit of $A_i$, $i\in \mathcal{T}$ at both sides of~(\ref{Eq:MIMOcase1_upperbound}).
	
	Conversely, the lower bound on the OIRS-assisted MIMO VLC capacity can be obtained based on the Entropy Power Inequality (EPI)~\cite{ThomasCover}, which is given by
	\begin{align}
		\setlength\abovedisplayskip{3pt}
		\label{Eq:MIMOcase1_EPI}
		\lim\limits_{\forall i\in \mathcal{T},A_i\to \infty} C\left( \boldsymbol{G}, \boldsymbol{F}, \boldsymbol{A} \right) \geq \lim\limits_{\forall i\in \mathcal{T},A_i\to \infty}\frac{N_t}{2}\log\left(1 + e^{\frac{2}{N_t}\left[ \sum_{i = 1}^{N_t}h(x_i) - h(\boldsymbol{z}') \right]}\right) \approx \sum_{i = 1}^{N_t}h(x_i) - h(\boldsymbol{z}'),
		\setlength\belowdisplayskip{3pt}
	\end{align}
	where the approximation is satisfied in the high-SNR regime.
	According to the principle of maximum entropy, given the peak optical intensity and average optical intensity constraints, the largest differential entropy of $x_i$ in~(\ref{Eq:MIMOcase1_EPI}) can be expressed as~\cite{ThomasCover}
	\begin{equation}
		\setlength\abovedisplayskip{3pt}
		\label{Eq:MIMOcase1_maximum entropy}
		h(x_i) = \log\left( \frac{A_i\left(1-e^{-\mu^*}\right)}{\mu^*} \right) + \mu^*\alpha.
		\setlength\belowdisplayskip{3pt}
	\end{equation}
	
	Note that the lower bound given by~(\ref{Eq:MIMOcase1_EPI}) and~(\ref{Eq:MIMOcase1_maximum entropy}) is identical to the right-hand side of~(\ref{Eq:MIMOcase1_upperbound_asymptotic}), which thus gives rise to the asymptotic capacity in~(\ref{Eq:MIMOcase1_capacity}).
	
	\section{Proof of Lemma~\ref{Lemma:thres1/2}}
	\label{app:B}
	The proof of Lemma~\ref{Lemma:thres1/2} is analogous to that in~\cite{lapidoth2009capacity}, except that the peak optical intensity constraint is changed from $x\leq A$ to $\boldsymbol{x} \preceq \boldsymbol{A}$, where $A$ denotes a common peak intensity of all LEDs.
	Define an auxiliary variable as $\bar{\boldsymbol{x}} \triangleq \boldsymbol{A} - \boldsymbol{x}$. 
	Then, it can be inferred that the mutual information between $\boldsymbol{x}$ and $\boldsymbol{y}$ can be rewritten as
	\begin{align}
		\setlength\abovedisplayskip{3pt}
		\label{Eq:MIMO_thres1/2_mutualInfor}
		\textit{I}\left( \boldsymbol{x}; \boldsymbol{Hx} + \boldsymbol{z} \right) = \textit{I}\left( \boldsymbol{A} - \boldsymbol{x}; \boldsymbol{H}\left(\boldsymbol{A} - \boldsymbol{x}\right) + \boldsymbol{z} \right) = \textit{I}\left( \bar{\boldsymbol{x}}; \boldsymbol{H}\bar{\boldsymbol{x}} + \boldsymbol{z} \right).
		\setlength\belowdisplayskip{3pt}
	\end{align}
	Let $B$ denote a Bernoulli variable independent to $\boldsymbol{x}$, with equal probability of $B =1$ and $B =0$.
	Accordingly, we define
	\begin{equation}
		\setlength\abovedisplayskip{5pt}
		\label{xbar}
		\tilde{\boldsymbol{x}} =
		\left\{
		\begin{aligned}
			&\boldsymbol{x}, & &\text{if}\ B=1, \\
			&\bar{\boldsymbol{x}}, & &\text{otherwise}.
		\end{aligned}
		\right.
		\setlength\belowdisplayskip{5pt}
	\end{equation}
	Due to the fact that the entropy is no less than its counterpart conditional entropy, the mutual information is upper-bounded by
	\begin{align}
		\setlength\abovedisplayskip{3pt}
		\label{Eq:MIMO_thres1/2_EPI}
		\textit{I}\!\ ( \tilde{\boldsymbol{x}}; \boldsymbol{H}\tilde{\boldsymbol{x}} + \boldsymbol{z} ) &\geq h\left( \boldsymbol{H}\tilde{\boldsymbol{x}} + \boldsymbol{z} | B \right) - h\left( \boldsymbol{z} \right) \notag\\
		& = \frac{1}{2}\left[ h\left(\boldsymbol{Hx} + \boldsymbol{z}\right) - h\left( \boldsymbol{z} \right) \right] + \frac{1}{2}\left[ h\left(\boldsymbol{H}\bar{\boldsymbol{x}} + \boldsymbol{z}\right) - h\left( \boldsymbol{z} \right) \right] \notag\\
		& = \textit{I}\left( \boldsymbol{x}; \boldsymbol{Hx} + \boldsymbol{z} \right),
		\setlength\belowdisplayskip{3pt}
	\end{align}
	where the two equalities result from~(\ref{xbar}) and~(\ref{Eq:MIMO_thres1/2_mutualInfor}), respectively.
	According to~(\ref{Eq:capacity}), the capacity of the OIRS-assisted MIMO VLC is equal to $\text{sup}\!\ \textit{I}( \boldsymbol{x}; \boldsymbol{Hx} + \boldsymbol{z} )$. 
	Hence, the capacity-achieving distribution of $\boldsymbol{x}$ should be the optimal distribution of $\tilde{\boldsymbol{x}}$, whose expectation is given by
	\begin{equation}
		\mathbb{E}\left[\tilde{\boldsymbol{x}}\right] = \frac{\boldsymbol{A}}{2}.
	\end{equation}
	As a result, the total average optical intensity constraint in~(\ref{Eq:total_average constriant}) becomes
	\begin{equation}
		\label{replaceAverage}
		\sum_{i = 1}^{N_t}\mathbb{E}\left( x_i \right) < \frac{\sum_{i = 1}^{N_t} A_i}{2}
	\end{equation}
	when $E > \sum_{i=1}^{N_t}A_i/2$, thus resulting in the capacity of MIMO VLC in~(\ref{Eq:MIMO_thres1/2_capacity}).
	
	\vspace{-0.4cm}
	
	\section{Proof of Proposition~\ref{pro2}}
	\label{app:C}
	In Case \uppercase\expandafter{\romannumeral2} with $\alpha\in [1/2,1]$, it follows from Lemma~\ref{Lemma:thres1/2} that the capacity-achieving distribution of $\boldsymbol{x}$ satisfies the condition of~(\ref{replaceAverage}). 
	Hence, we only need to consider $\alpha=1/2$ in this case.
	Firstly, the capacity upper bound can be obtained by setting the output distribution as in~\cite{8006585}, yet with different $A_i$ on LEDs.
	Secondly, the capacity lower bound can be obtained by EPI, where the maximum entropy distribution of $\boldsymbol{x}$ under peak optical intensity constraint should follow the uniform distribution, whose probability distribution function can be expressed as
	\begin{equation}
		\setlength\abovedisplayskip{3pt}
		\label{Eq:MIMOcase2_uniformDistri}
		Q\left(\boldsymbol{x}\right) = \prod_{i=1}^{N_t}\frac{1}{A_i}\textbf{I}\{0\leq x_i\leq A_i\}.
		\setlength\belowdisplayskip{3pt}
	\end{equation}
	By this means, we can obtain the asymptotic capacity of the OIRS-assisted MIMO VLC in~(\ref{Eq:MIMOcase2_capacity}), whereas the details are similar to Case \uppercase\expandafter{\romannumeral1} and thus omitted for brevity.
	
\end{appendices}

\bibliographystyle{abbrv}
\vspace{-0.4cm}
\bibliography{IEEEabrv,reference}
\vspace{-0.5cm}
\end{document}